\newif\ifjournal
\newtheorem{theorem}{Theorem}[section]
\newtheorem{lemma}[theorem]{Lemma}
\newtheorem{corollary}[theorem]{Corollary}
\newenvironment{proofof}[1]{\smallskip\par\noindent{\em Proof of #1.}}{\qed}
\newcommand{\set}[1]{\{ #1 \}}
\newcommand{\algorithm}[2]
{\small
{\fbox{\fbox{\begin{minipage}{4in}
{\bf #1}\vspace*{.1cm}\hrule
\begin{tabbing}
\qquad\=\qquad\=\qquad\=\qquad\=\qquad\=\qquad\=\kill
#2
\end{tabbing}
\end{minipage}}}}}
\begin{document}

\title{Online Constrained Forest and Prize-Collecting Network Design}

\author{Jiawei~Qian%
\thanks{Ping An Securities, Hong Kong SAR. 
Email:
\href{mailto:jq35@cornell.edu}{jq35@cornell.edu}.
Research performed while at Cornell University and supported in part by NSF grant CCF-1115256.}
\and
Seeun~William~Umboh\thanks{Department of Mathematics and Computer Science, Eindhoven University of Technology, 5600 MB Eindhoven, The Netherlands.
Email:
\href{mailto:seeun.umboh@gmail.com}{seeun.umboh@gmail.com}.
Part of this work was done while visiting the Simons Institute for the Theory of Computing and supported by NWO Vidi grant 639.022.211.}
\and
David~P.~Williamson\thanks{School of Operations Research and Information Engineering, Cornell University, Ithaca, NY, 14853, USA.
Email:
\href{mailto:dpw@cs.cornell.edu}{dpw@cs.cornell.edu}.
Supported in part by NSF grant CCF-1115256.}
}

\date{\today}

\maketitle

\begin{abstract}
In this paper, we study a very general type of online network design problem, and generalize two different previous algorithms, one for an online network design problem due to  Berman and Coulston \cite{BermanC97} and one for (offline) general network design problems due to Goemans and Williamson \cite{GoemansW95}; we give an $O(\log k)$-competitive algorithm, where $k$ is the number of nodes that must be connected.  We also consider a further generalization of the problem that allows us to pay penalties in exchange for violating connectivity constraints; we give an online $O(\log k)$-competitive algorithm for this case as well.  
\end{abstract}

\ifjournal
\keywords{online algorithms, competitive ratio, generalized Steiner tree, prize-collecting Steiner tree}
\fi

\section{Introduction}

Network design has been a fundamental application of techniques in combinatorial optimization for some time; see the volume of Ball et al.\ \cite{BallMMN95} for an overview.  Most models assume that all the connectivity requirements are given in advance.  However, it is sometimes the case that decisions in constructing the network must be made as customers arrive over time; decisions to build network infrastructure must be made at the time the customer arrives, and cannot be undone in later time steps.  Such problems have been studied under a model known as {\em online} decision making; algorithms in this model are measured in terms of their {\em competitive ratio}, which gives a bound on how far away the algorithm's solution can be away from an optimal solution found when given all the connectivity information in advance.  Problems in which all the input (including connectivity information) is known in advance are then called {\em offline} problems.

As a running example, we define here the {\em generalized Steiner tree problem}, also known as the {\em Steiner forest problem}.  In the offline version of this problem, we are given an undirected graph $G=(V,E)$, edge costs $c_e \geq 0$ for all $e \in E$, and a set of $k$ source-sink pairs $s_i$-$t_i$ as input. The goal of the problem is to find a minimum-cost set of edges $F \subseteq E$ such that for each $i$, $s_i$ and $t_i$ are connected in $(V,F)$.  This problem is (as its name implies) a generalization of the {\em Steiner tree problem}: in the offline version of the Steiner tree problem, we are given an undirected graph with edge costs as above, and also a set $R \subseteq V$ of {\em terminals}.  The goal of the Steiner tree problem is to find a minimum-cost tree $T$ that spans all the terminals in $R$.
The Steiner tree problem is one of Karp's original NP-hard problems \cite{Karp72}.
If we choose one of the terminals $r \in R$ arbitrarily, set $s_i = r$ for all $i$, and let the sink vertices $t_i$ be the remaining vertices in $R$, then clearly a Steiner tree instance can be expressed as a generalized Steiner tree problem instance.

In the online version of the generalized Steiner tree problem, we do not know the source-sink pairs in advance.  The online problem proceeds in a sequence of discrete time steps; in each time step $i$, a source-sink pair $s_i$-$t_i$ arrives, and we must find a set of edges $F$ such that each $s_j$-$t_j$ pair that has arrived thus far is connected in $(V,F)$.  Furthermore, once we have decided to include an edge in $F$, we may not remove it at later time steps; once we have constructed an edge in our network, the cost is sunk and we may not recover it at future points in time.  

The following simple example shows that in an online setting, we cannot in general find an optimal offline solution, even given unlimited computational power. Consider the 4-cycle with vertices $v_1,v_2,v_3,v_4$ in Figure \ref{online_ex}. All edges have cost 1.  
Suppose $(v_1, v_3)$ is the first pair to arrive, in time step 1. We can choose either path $(v_1,v_2,v_3)$ or path $(v_1,v_4,v_3)$ to connect it. Without loss of generality, we will choose path $(v_1,v_2,v_3)$. Then, if $(v_1,v_4)$ arrives in the second time step, we could have saved a cost of one if we had chosen the other path in the first time step. However, even if we did that, $(v_1,v_2)$ could be the pair arriving at time step 2 and we would face the same problem.

\begin{figure}[t]
  \centering
     \includegraphics[width=0.3\textwidth]{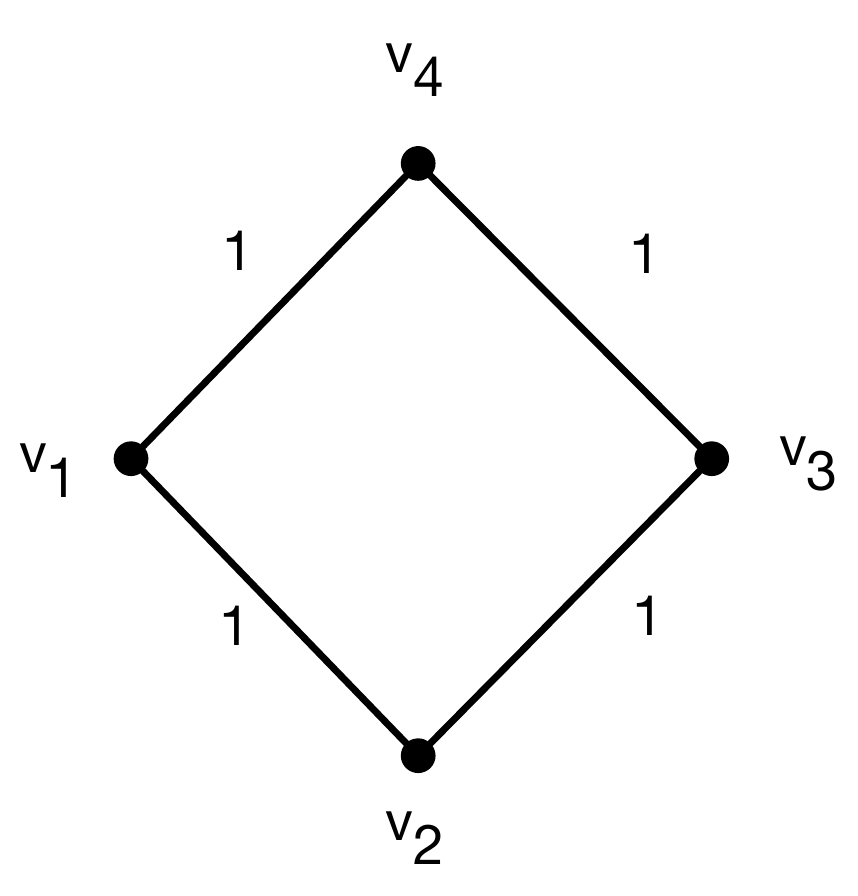}\\
  \caption{Example of the online generalized Steiner tree problem}\label{online_ex}
\end{figure}

As mentioned above, the quality of an online algorithm is often measured in terms of its {\em competitive ratio}: an {\em $\alpha$-competitive algorithm} is one such that at any time step, the value of current solution is within a factor of $\alpha$ of the value of an optimal offline solution.  For the online generalized Steiner tree problem, an $\alpha$-competitive algorithm constructs a set of edges that at the current time step has cost at most $\alpha$ times the cost of the optimal solution for the set of source-sink pairs that have arrived thus far.  This notion should be compared to that of an approximation algorithm.  Approximation algorithms are given for offline problems; an $\alpha$-approximation algorithm is guaranteed to run in polynomial time and produce a solution with cost at most $\alpha$ times the value of an optimal solution. Agrawal, Klein, and Ravi \cite{AgrawalKR95} give a 2-approximation algorithm for the offline generalized Steiner tree problem.

Online algorithms are known for both the online Steiner tree problem and the online generalized Steiner tree problem.  In the online version of the Steiner tree problem, terminals arrive over time.  At each time step we must give a set of edges $F$ that connects all of the terminals that have arrived thus far; we are not allowed to remove any edges from $F$ in future iterations. As stated above, in the online generalized Steiner tree problem, source-sink pairs arrive in each time step, and we must find a set of edges $F$ such that each $s_i$-$t_i$ pair that has arrived thus far is connected in $(V,F)$. Imase and Waxman \cite{ImaseW91} give a greedy $O(\log k)$-competitive algorithm for the online Steiner tree problem, where $k$ is the number of terminals; when a terminal arrives, it finds the shortest path from the terminal to the tree already constructed, and adds that set of edges to its solution.  Imase and Waxman also show that the competitive ratio of any online algorithm must be at least $\frac{1}{2} \log_2 k$; one can show this lower bound by repeatedly replacing each edge in the graph of Figure \ref{online_ex} with a copy of the graph. Awerbuch, Azar, and Bartal \cite{AwerbuchAB04} show that a similar greedy algorithm for the online generalized Steiner tree problem has a competitive ratio of $O(\log^2 k)$.   Berman and Coulston \cite{BermanC97} give a more complicated algorithm that is an $O(\log k)$-competitive algorithm for the online generalized Steiner tree problem, matching the lower bound of Imase and Waxman to within constant factors.

Part of the contribution of this paper is to extend the types of network design problems for which online algorithms are known. Goemans and Williamson \cite{GoemansW95} extended the offline algorithm of Agrawal, Klein, and Ravi \cite{AgrawalKR95} to a large class of problems they called constrained forest problems; in doing so, they cast the algorithm of Agrawal et al.\ as a primal-dual algorithm, one that simultaneously constructs a feasible primal solution to an integer programming formulation of the problem as well as a feasible solution to the dual of a linear programming relaxation.  A constrained forest problem is defined by a function $f:2^V \rightarrow \{0,1\}$; for any set $S \subseteq V$ such that $f(S) = 1$, a feasible solution must select at least one edge in $\delta(S)$, the set of edges with exactly one endpoint in $S$.  The Goemans-Williamson algorithm works when the function $f$ is {\em proper}: that is, when $f(S) = f(V-S)$ for all $S \subseteq V$, $f(\emptyset) = f(V) = 0$ and for all disjoint sets $A, B \subseteq V$, $f(A \cup B) \leq \max(f(A), f(B))$; we also assume that $f$ is polynomial-time computable. For instance, for the case of the generalized Steiner tree problem $f(S) = 1$ if and only if there exists some $i$ such that $|S \cap \{s_i,t_i\}| = 1$, and this function is proper.  Another example of a constrained forest problem given in Goemans and Williamson \cite{GoemansW95} is the nonfixed point-to-point connection problem, in which a subset $C$ of vertices are sources, a disjoint subset $D$ of vertices are destinations, and we must find a minimum-cost set of edges such that each connected component has the same number of sources and destinations; this is modelled by having $f(S) = 1$ if $|S \cap C| \neq |S \cap D|$.  Yet another example given in \cite{GoemansW95} is that of partitioning specified vertices $D$ into connected components such that the number of vertices of $D$ in each connected component $C$ is divisible by some parameter $\ell$.  This problem is given the proper function $f$ such that $f(S) = 1$ if $|S \cap D| \not\equiv 0(\mathrm{mod~}\ell)$.

In this paper, we show that by melding the ideas of Goemans and Williamson \cite{GoemansW92} with those of Berman and Coulston \cite{BermanC97}, we can obtain an $O(\log k)$-competitive algorithm for any {\em online} constrained forest problem.  In an online constrained forest problem, in each time step $i$ we are given a proper function $f_i$.  We must choose a set of edges $F$ such that for all $S \subseteq V$, if $\max_{j = 1,\ldots,i} f_j(S)=1$, then $|\delta(S) \cap F| \geq 1$ (one can verify that the function $\max_{j = 1,\ldots,i} f_j(S)$ is itself proper).  In our case, $k$ is the number of vertices $v$ for which $f_i(\{v\})=1$ for some $i$.  This yields, for example, algorithms for online variants of the nonfixed point-to-point connection problem and the partitioning problem given above.

Our techniques also extend to give an $O(\log k)$-competitive algorithm for a very general set of network design problems in which we may wish to pay a penalty instead of fulfilling a connectivity requirement. 
One such example is that of the prize-collecting Steiner tree problem. In the offline version of the prize-collecting Steiner tree problem, we are given an undirected graph $G=(V,E)$, edge costs $c_e \geq 0$ for all $e \in E$, a root vertex $r \in V$, and penalties $\pi_v \geq 0$ for all $v \in V$.  The goal is to find a tree $T$ spanning the root vertex that minimizes the cost of the edges in the tree plus the penalties of the vertices not spanned by the tree; that is, we want to minimize $\sum_{e \in T} c_e + \sum_{v \in V-V(T)} \pi_v$, where $V(T)$ is the set of vertices spanned by $T$.  In the online version of the problem, initially every vertex $v$ has penalty $\pi_v = 0$.  At each time step, the penalty $\pi_v$ for some vertex $v$ is increased from 0 to some positive value.  We then must either connect the vertex to the root by adding edges to our current solution or pay the penalty $\pi_v$.  The competitive ratio of the algorithm compares the cost of our solution in each time step with the cost of the optimal solution of the instance at the same time step.  The offline version of this problem was studied by researchers at AT\&T since the problem models that of making decisions of when to extend the current network to new clients, where each penalty represents the profits forgone by not connecting the client; see  Johnson, Minkoff, and Phillips \cite{JohnsonMP00}.  Our techniques further extend to online versions of the prize-collecting generalized Steiner tree problem introduced by Hajiaghayi and Jain \cite{HJ06}. The online prize-collecting generalized Steiner tree problem is as follows: initially we are given an undirected graph $G$, and a penalty of zero for each pair of nodes. In each time step $i$, a terminal pair $(s_i, t_i)$ arrives with a new penalty $\pi_i >0$. We have a choice to either connect $s_i$ to $t_i$ or pay the penalty $\pi_i$ for not connecting them.  Our goal is to find a set of edges $F$ that minimizes the sum of edge costs in $F$ plus the sum of penalties for terminal pairs that are not connected.  Our technique also extends to an online version of a problem of Hayrapetyan, Swamy, and Tardos \cite{HST05}, in which we must minimize the cost of a tree spanning a root vertex $r$, plus a monotone submodular penalty function $h$ on all the unspanned vertices. In the online version, in each time step $i$, a new monotone submodular function $h_i$ arrives.   See Section \ref{sec:pccf} for more details.  We obtain our results by giving an $O(\log k)$-competitive algorithm for an online version of the prize-collecting constrained forest problem introduced by Sharma, Swamy, and Williamson \cite{SSW07}, which generalizes the online prize-collecting Steiner tree problem, the online prize-collecting generalized Steiner tree problem, and the online version of the problem of Hayrapetyan et al.  We introduce this general problem in Section \ref{sec:pccf}.

We now give a sketch of the algorithmic ideas and the analysis.  The basic idea of the Berman-Coulston algorithm (BC) is that it constructs many different families of nonoverlapping balls around terminals as they arrive; in the $j$th family, balls are limited to have radius at most $2^j$.  Each family of balls is a lower bound on the cost of an optimal solution to the generalized Steiner tree problem; the balls can be seen as a feasible solution to the dual of a linear programming relaxation of the problem.  When balls from two different terminals touch (corresponding to a tight dual constraint), the algorithm buys the set of edges connecting the two terminals, and balls from one of the two terminals (in some sense the `smaller' one) can be charged for the cost of the edges, leaving the balls from the other terminal (the `larger' one) uncharged and able to pay for future connections.  Thus by induction, it can be shown that the cost of the edges constructed can be charged to the balls in all the families.  One can show that the $O(\log k)$ largest families are essentially all that are relevant for the charging scheme, so that the largest of these $O(\log k)$ families is within an $O(\log k)$ factor of the cost of the constructed solution, thereby giving the competitive ratio. Our algorithm for the online constrained forest problem extends the BC algorithm in several ways. First, our algorithm explicitly uses solutions to the dual of the linear programming relaxation of the constrained forest problem, as used by Goemans and Williamson, resulting in somewhat more complicated dual solutions than the balls used by BC. Second, to ensure that we output a feasible solution, our algorithm uses a more sophisticated dual construction in which the $j$th dual solution also takes into account edges that were added due to tight constraints of the other dual solutions.
In particular, our algorithm incorporates a ``consolidate'' step which ensures that the algorithm only raises dual variables that correspond to a union of a collection of connected components of $F$. However, we can then largely follow the outline of the BC analysis to obtain our $O(\log k)$ competitive ratio. 

The rest of this paper is structured as follows.  In Section \ref{sec:prelim}, we introduce the online constrained forest problem more precisely and define some concepts we will need for our algorithm.  In Section \ref{sec:alg}, we give the algorithm and its analysis.  In Section \ref{sec:pccf}, we extend the algorithm to handle penalties, and explain how the extension captures online versions of the  prize-collecting Steiner tree and prize-collecting generalized Steiner tree problem.
We conclude in Section \ref{sec:conc} with some open questions.

The online constrained forest problem and online prize-collecting Steiner tree were introduced in a preliminary version of this paper \cite{QianW11}.  However, the algorithm and analysis in this preliminary version were later discovered to be flawed and we give a corrected version of the algorithm and proofs in Section \ref{sec:alg}. Since the preliminary version appeared, there has been some additional work done on these problems and related ones. Umboh \cite{Umboh15} gives a new and simpler analysis of the Berman-Coulston algorithm for online generalised Steiner tree via the idea of hierarchically well-separated trees. He also gives another $O(\log k)$-competitive algorithm for the prize-collecting version that is analysed in the same way. For the more general node-weighted setting, in which costs are associated with nodes rather than edges, Hajiaghayi, Liaghat, and Panigrahi give polylogarithmic-competitive algorithms for the online constrained forest problem \cite{HajiaghayiLP13} and the online prize-collecting generalised Steiner tree problem \cite{HajiaghayiLP14}. For the edge-weighted setting, the algorithm of \cite{HajiaghayiLP13} yields a $O(\log k)$-competitive algorithm for the online constrained forest problem that is different from ours, and \cite{HajiaghayiLP14} also gives an alternate $O(\log k)$-competitive algorithm for the online prize-collecting Steiner tree problem. Because the preliminary version of this paper \cite{QianW11} was flawed, the paper of Hajiaghayi, Liaghat, and Panigrahi \cite{HajiaghayiLP13} had the first correct $O(\log k)$-competitive algorithm for the online constrained forest problem, and their paper \cite{HajiaghayiLP14} had the first correct $O(\log k)$-competitive algorithm for the prize-collecting Steiner tree problem.  To the best of our knowledge, there is no previous work that tackles the online prize-collecting constrained forest problem. 

\section{Preliminaries}
\label{sec:prelim}
Recall that a function $f:2^V \rightarrow \{0,1\}$ is {\bf proper} if $f(S)=f(V-S)$ for all $S \subseteq V$, $f(\emptyset)=f(V)=0$, and for disjoint sets $A, B \subseteq V$, $f(A \cup B) \leq \max(f(A),f(B))$.  Given an undirected graph $G=(V,E)$, edge costs $c_e\geq 0$ and a proper function $f$, the offline constrained forest problem studied in Goemans and Williamson \cite{GoemansW95} is to find a set of edges $F$ of minimum cost that satisfies a connectivity requirement function $f : 2^V \rightarrow \{0,1\}$; the function is {\bf satisfied} if for each set $S\subseteq V$ with $f(S)=1$, we have $|\delta(S) \cap F| \geq 1$, where $\delta(S)$ is the set of edges with exactly one endpoint in $S$.  In the online version of this problem, we have a sequence of connectivity functions $f_1,f_2,...,f_i$, arriving one at a time. Starting with $F=\emptyset$, for each {\bf time step} $i \geq 1$, function $f_i$ arrives and we need to add edges to $F$ to satisfy function $f_i$.  Once an edge is added to $F$, it cannot be removed in a later time step.  Let $g_i(S) = \max \{f_1(S),...,f_i(S)\}$ for all $S\subseteq V$ and $i \geq 1$.  Then our goal is to a find a minimum-cost set of edges $F$ that satisfies function $g_i$, that is, all connectivity requirements given by $f_1,...,f_i$ that have arrived thus far.  We require that each function $f_i$ be a proper function, as defined above.
It is easy to see that function $g_i$ is also proper.

Call a vertex $v$ a {\bf terminal} at time $i$ if $g_i(\{v\}) = 1$.  Let $R_i = \{s \in V \;|\; g_i(\{s\}) =1\}$ be the set of terminals defined by function $g_i$; that is, $R_i$ is the set of all terminals that have arrived by time $i$.  A special case of this problem is the online generalized Steiner tree problem, in which terminal pairs $(s_1,t_1),...,(s_i,t_i)$ arrive one at a time. In this case, $f_i(S) = 1$ iff $|S\cap\{s_i,t_i\}| = 1$ and $(s_i, t_i)$ is the pair of terminals that arrive in time step $i$; then $R_i = \{s_j, t_j: j \leq i\}$.  Berman and Coulston \cite{BermanC97} give an $O(\log |R_i|)$-competitive algorithm for the online generalized Steiner tree problem.

Let $(IP_i)$ be an integer program corresponding to the online proper constrained forest problem with set of functions $f_1,...,f_i$ that have arrived thus far and the corresponding function $g_i$. The integer programming formulation is
\lps
& &
& \mbox{Min} & \sum_{e \in E} c_{e} x_{e} \\
(IP_i) & & & & \sum_{e \in \delta(S)}x_{e} \geq g_i(S), & \forall S\subseteq V,\\
& & & & x_e \in \set{0,1}, & \forall e \in E. \elps

We let $(LP_i)$ denote the corresponding linear programming relaxation in which the constraints $x_e \in  \{0,1\}$ are replaced with  $x_e \geq 0$. The dual of this linear program, $(D_i)$, is
\lps & &
& \mbox{ Max} &  \sum_{S \subseteq V} g_i(S)y_S  \\
(D_i) & & & & \sum_{S: e \in \delta(S)} y_{S} \leq c_e, & \forall e \in E,\\
& & & & y_S \geq 0, & \forall S \subseteq V. \elps

We now define a number of terms that we will need to describe our algorithm.  We will keep an infinite number of feasible dual solutions $y^j$, $j = \ldots, -2,-1,0,1,2,\ldots$, to bound the cost of edges in our solution $F$ over all time steps; we call $y^j$ the dual solution for {\bf level } $j$. For each level $j$, we will maintain that for any terminal $s$ that has arrived thus far, $\sum_{S \subseteq V: s \in S} y_{S}^j \leq 2^j$. So we say that the {\bf limit} of the dual in level $j$ is $2^j$, and we say that a dual variable $y_S^j$ {\bf reaches its limit} if the inequality for level $j$ is tight for any terminal $s\in S$.  An edge $e\in E$ is  {\bf tight in level $j$} for dual vector $y^j$ if the corresponding constraint in dual problem $(D_i)$, $\sum_{S:e\in\delta(S)} y_S^{j}  \leq   c_e$, holds with equality.

Let $\bar{F}^{j}$ denote the set of edges that are tight in level $j$ plus the set of edges in the current solution $F$. To avoid confusion with connected components in $F$, we will use the term {\bf moat} to refer to a connected component $S$ of vertices in $\bar{F}^j$ and use $y_S^j$ to refer the dual variable associated with $S$; in order to emphasize that the moat $S$ is from a particular level $j$ and is with respect to the tight edges for that level, we will superscript the set $S$ with $j$, and denote it $S^j$.  We will increase dual variables $y_S^j$ corresponding to particular moats $S^j$.  Note that because the edges of $F$ are a subset of $\bar{F}^j$, a moat of level $j$ is a collection of the connected components of $F$.  See Figure \ref{moats} for an illustation of moats.

\begin{figure}[t]
	\centering
	\includegraphics[width=0.7\textwidth]{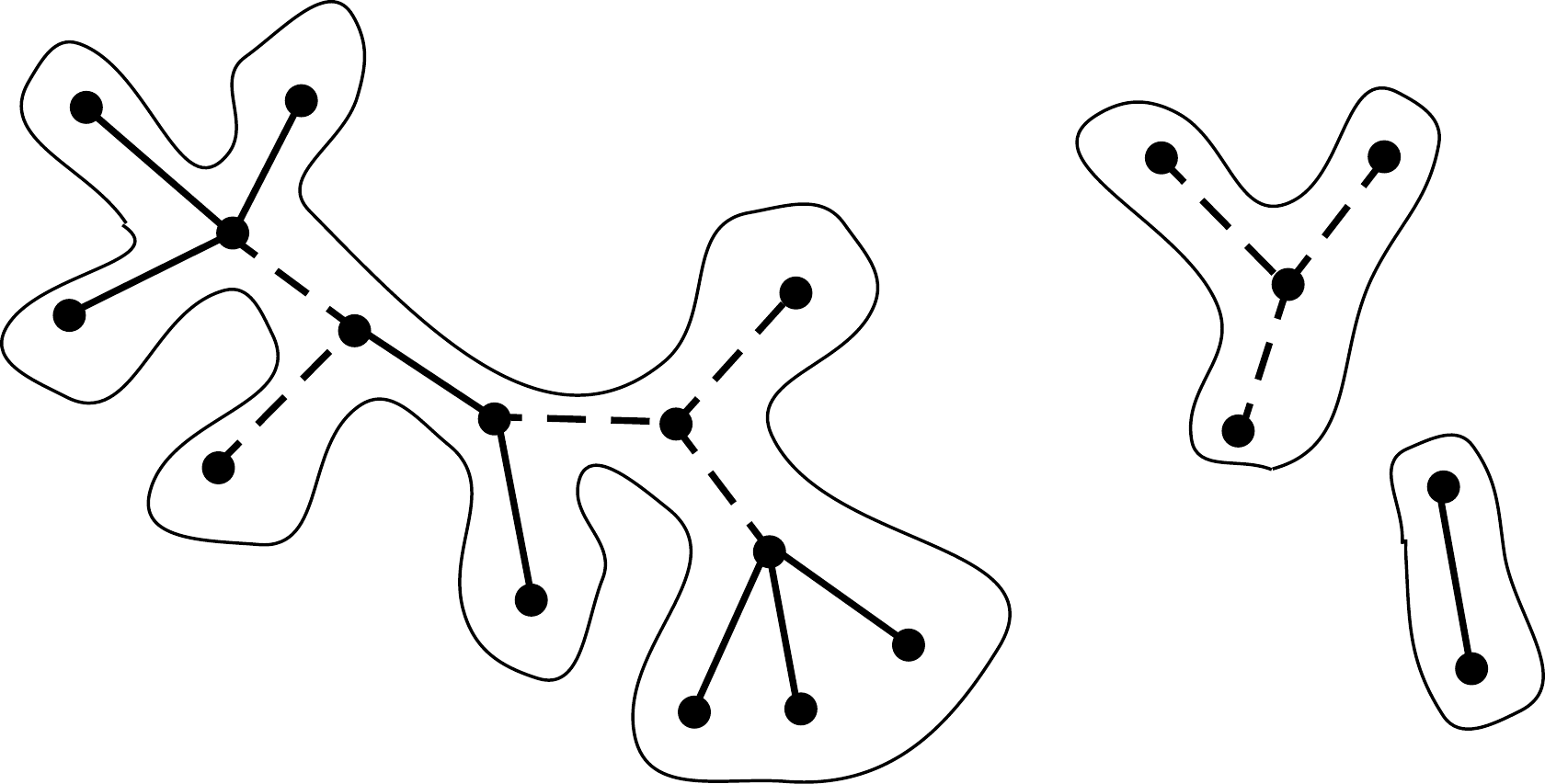}\\
	\caption{Illustration of moats at level $j$.  The solid lines represent edges in the current solution $F$, while the dashed lines represent edges that are tight in level $j$.  The moats are the connected components of the union of both the edges in $F$ and the edges tight at level $j$, and are circled.  Notice that a moat can contain multiple connected components of the current solution $F$, as the leftmost moat does.}\label{moats}
\end{figure}

A set $S \subseteq V$ is a {\bf violated} set for function $g_i$ by edges $F$ if  $|\delta(S) \cap F| < g_i(S)$; that is, if $g_i(S)=1$ but $\delta(S) \cap F=\emptyset$. Notice that for connected component $C$ of a set of edges $F$, no strict subset of $C$ can be violated.  The algorithm considers increasing duals for sets $S$ that are moats -- the connected components of $\bar{F}^j$ --  with $g_i(S)=1$, precisely because we wish to add edges to our solution from $\delta(S)$ so as to satisfy these violated sets.  We observe below that if $g_i(C)=0$ for every connected component in a set of edges $F$, then $g_i$ is satisfied by $F$, so that we can terminate the algorithm in time step $i$ when this occurs.

\begin{lemma} \label{feas}
If $g_i(C)=0$ for every connected component in a set of edges $F$, then $g_i$ is satisfied by $F$.
\end{lemma}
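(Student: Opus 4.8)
The plan is to prove the contrapositive: if $g_i$ is not satisfied by $F$, then there is some connected component $C$ of $F$ with $g_i(C)=1$. So suppose $F$ does not satisfy $g_i$, meaning there exists a violated set $S \subseteq V$ with $g_i(S)=1$ but $\delta(S)\cap F = \emptyset$. The key structural observation is that since $\delta(S)\cap F=\emptyset$, the set $S$ must be a union of connected components of $F$: no edge of $F$ crosses the boundary of $S$, so every connected component of $F$ lies entirely inside $S$ or entirely outside $S$. Write $S = C_1 \cup C_2 \cup \cdots \cup C_m$ as the disjoint union of the connected components of $F$ contained in $S$.

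Now I would invoke the properness of $g_i$ (which is noted in the excerpt to be proper whenever each $f_j$ is). Since $g_i$ satisfies the disjointness/subadditivity-type condition $g_i(A\cup B)\le \max(g_i(A),g_i(B))$ for disjoint $A,B$, an easy induction on $m$ gives $g_i(S)=g_i(C_1\cup\cdots\cup C_m)\le \max_{1\le \ell\le m} g_i(C_\ell)$. Since $g_i(S)=1$, at least one component $C_\ell$ has $g_i(C_\ell)=1$ (here I should also note the base case: if $S$ happens to contain no components, then $S=\emptyset$ and $g_i(\emptyset)=0$, contradicting $g_i(S)=1$, so $m\ge 1$). This $C_\ell$ is a connected component of $F$ with $g_i(C_\ell)=1$, which is exactly the negation of the hypothesis, completing the contrapositive.

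The only real content is the observation that a set with empty boundary in $F$ decomposes into connected components of $F$, together with the iterated application of the properness inequality; both are routine. I do not expect any genuine obstacle here — the statement is essentially the standard fact underlying primal-dual constrained forest algorithms that "uncrossing is unnecessary" because it suffices to raise duals on connected components. If one wanted to be careful, the single point to check is the edge case $m=1$, where the properness inequality is vacuous but the conclusion $g_i(C_1)=g_i(S)=1$ is immediate, and the edge case $m=0$ handled via $g_i(\emptyset)=0$ as above.
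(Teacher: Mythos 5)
Your proof is correct and is essentially the paper's argument in contrapositive form: both rest on the observation that a set with $\delta(S)\cap F=\emptyset$ is a disjoint union of connected components of $(V,F)$, followed by the disjoint-union property of proper functions to bound $g_i(S)$ by the maximum over those components. No further comment is needed.
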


\begin{proof}
Note that for any set $S$, if $S$ contains some but not all of a connected component in $F$, then $|\delta(S) \cap F| \geq 1$ and so $S$ is not violated.  However, if $S$ is a union of connected components $C_j$ in $F$, then since $g_i(C_j)=0$ for each connected component $C_j$, $g_i(S) \leq \max_j g_i(C_j) = 0$, and $S$ is not violated. Thus if $g_i(C) = 0$ for all connected components $C$ of $F$, then $g_i$ is satisfied by $F$.  
\end{proof}

At the start of time step $i$, a terminal $s \in R_i$ is an {\bf active terminal} if for some connected component $X$ of the current solution $F$, we have $s\in X$ and $X$ is a violated set for function $g_i$. Let $A$ be the set of active terminals at the beginning of the time step.  Our algorithm carries out work at level $j$ then proceeds to the next level $j+1$.  If a terminal is still active when the algorithm starts its work on level $j$, we will say the terminal is {\bf active at level $j$}, and we will denote these terminals by $A_j$.  As we add edges to our solution $F$, it may be the case that for active terminal $s \in A_j$, we add edges such that $s$ is in a connected component $X$ of $F$ with $g_i(X) = 0$; at this point $s$ is no longer active. We may say that $s$ has become {\bf inactive}; it was {\bf previously active}. We denote the set of all terminals that were previously active at level $j$ (at any time step) as $P_j$.   Also, as we increase dual variables, a terminal $s$ active at level $j$ may reach its limit at level $j$; that is, $\sum_{S: s \in S} y_S^j = 2^j$.  In this case, we move $s$ from $A_j$ to $P_j$.

A moat $S^j$ is an {\bf active moat} if $g_i(S^j)=1$ and its corresponding dual variable $y_S^j$ has not yet reached its limit in level $j$.  Note that an active moat $S^j$ is a violated set for $g_i$ by edges $\bar{F}^j$ since $g_i(S^j)=1$ and $\delta(S^j) \cap \bar F^j = \emptyset$ because moat $S^j$ is a connected component of $\bar{F}^j$.  We denote the current set of active moats by $\mathcal{M}$. We say a dual variable $y_S^j$ is an {\bf active dual variable} if its corresponding moat $S^j$ is active.

\section{The Algorithm and Its Analysis}
\label{sec:alg}
\subsection{The Primal-Dual Online Algorithm}

Our algorithm (see Fig. \ref{alg}) is a dual ascent algorithm in which we grow active dual variables, starting at lowest level $j$.   We increase dual variables around active terminals in level $j$ and buy paths between terminals until either all terminals are inactive, or we can no longer increase dual variables around active terminals, since the dual variables have reached their limits.  Then we proceed to level $j+1$.

More precisely, our algorithm starts with $F = \emptyset$ and $y_S^j = 0$ for all $j$ and all $S\subseteq V$. At the beginning of each time step $i$, the function $f_i$ arrives and some non-terminal nodes in $V$ may become terminals. We  update active terminal set $A$ and active moat set $\mathcal{M}$. Conceptually we loop through the levels $j$, starting at level $-\infty$ and continuing to level $\infty$; we explain below how we can omit very small and very large values of $j$ so that the algorithm is implementable in polynomial time.  For each level $j$, we execute two distinct while loops; we call the first the {\bf consolidate} loop for level $j$, and the second the {\bf dual growth} loop for level $j$. In the consolidate loop, we add edges in $F-\bar{F}^j$ to $\bar{F}^j$ one at a time; adding such an edge may cause two moats to be merged.  We then add paths to $F$ connecting any pair of terminals $s_1 \in A_j$ ($s_1$ currently active) and $s_2 \in P_j$ ($s_2$ previously active) in the same moat that were not already connected in $F$.   In  the dual growth loop, while there are still active terminals at level $j$,  our algorithm uniformly increases all active dual variables $y_S^j$ until: (1) an active $y_S^j$ reaches its limit in level $j$; (2) an edge $e\in E$ becomes tight in level $j$; we then add $e$ to $\bar{F}^{j}$; (3) two terminals $s_1 \in A_j$ and $s_2 \in A_j \cup P_j$ {\bf connect in level $j$}; that is, there is a path of edges between them  that are either tight or in $F$.  We then let $p$ be this path of edges (that are either tight or in $F$) connecting $s_1$ and $s_2$ that minimizes $\sum_{e \in p-F} c_e$; we build path $p$ in $F$, and update the set $A$ of active terminals and the set  $\mathcal{M}$ of active moats. We output $F$ as the solution for $(IP_i)$.

We remark that the consolidate loop serves two purposes. First, by adding $F$ to $\bar{F}^j$, it ensures that each component of $\bar{F}^j$ is a collection of the connected components of $F$. Second, it ensures that the level-$j$ terminals that are contained in the same level-$j$ moat $S^j$ are contained in the same connected component of $F$ (Lemma \ref{unique}). These properties imply that active terminals are always contained in an active moat and thus the algorithm is well-defined.

The following example illustrates the algorithm and the necessity of the consolidate loop. Consider Figure \ref{fig:alg-ex}: the input graph consists of 4 terminals $s_1, s_2, s_3, s_4$ on a line and the proper function $g$ is such that $g(S) = |S| \mod 2$. For levels $j < -1$, all terminals are active and the algorithm grows dual variables around each of them, but the dual variables reach their limit without any edges going tight. Thus, the algorithm starts level $-1$ with $F = \emptyset$ and all terminals still active. At the end of the level, the edge $(s_2,s_3)$ goes tight and gets added to $F$. The terminals $s_2$ and $s_3$ then become inactive. At the beginning of level $0$, the consolidate loop adds $(s_2,s_3)$ to $\bar{F}^0$. The algorithm then grows dual variables around each of the remaining active terminals $s_1$ and $s_4$. However, these dual variables reach their limit before any edge goes tight. At the beginning of level $1$, the consolidate loop again adds the edge $(s_2,s_3)$ to $\bar{F}^{1}$. At the end of the level, the edges $(s_1, s_2)$ and $(s_3, s_4)$ goes tight and are added to $\bar{F}^{1}$. At this point, $\bar{F}^{1}$ contains a path connecting the remaining two active terminals $s_1$ and $s_4$, so the edges $(s_1,s_2)$ and $(s_3,s_4)$ are added to $F$. There are no remaining active terminals and $F$ is a feasible solution. Now, we argue that the algorithm is not well-defined without the consolidate loop. Consider the algorithm without the consolidate loop. The algorithm essentially behaves in the same way for levels below level $1$. Now,  the algorithm starts level $1$ with $\bar{F}^1 = \emptyset$. The edges $(s_1,s_2)$ and $(s_3,s_4)$ still go tight during the dual growth loop, but note that once they get added to $\bar{F}^1$, the level-$1$ moats are $S_1 = \{s_1,s_2\}$ and $S_2 = \{s_3,s_4\}$. Since $g(S_1) = g(S_2) = 0$, there are no more moats even though $s_1$ and $s_4$ are still active. Thus, the consolidate loop is necessary for the algorithm to be well-defined.

\begin{figure}[ht]
\centering
\subfigure[Input graph with 4 terminals.]{
  \includegraphics[scale=0.3]{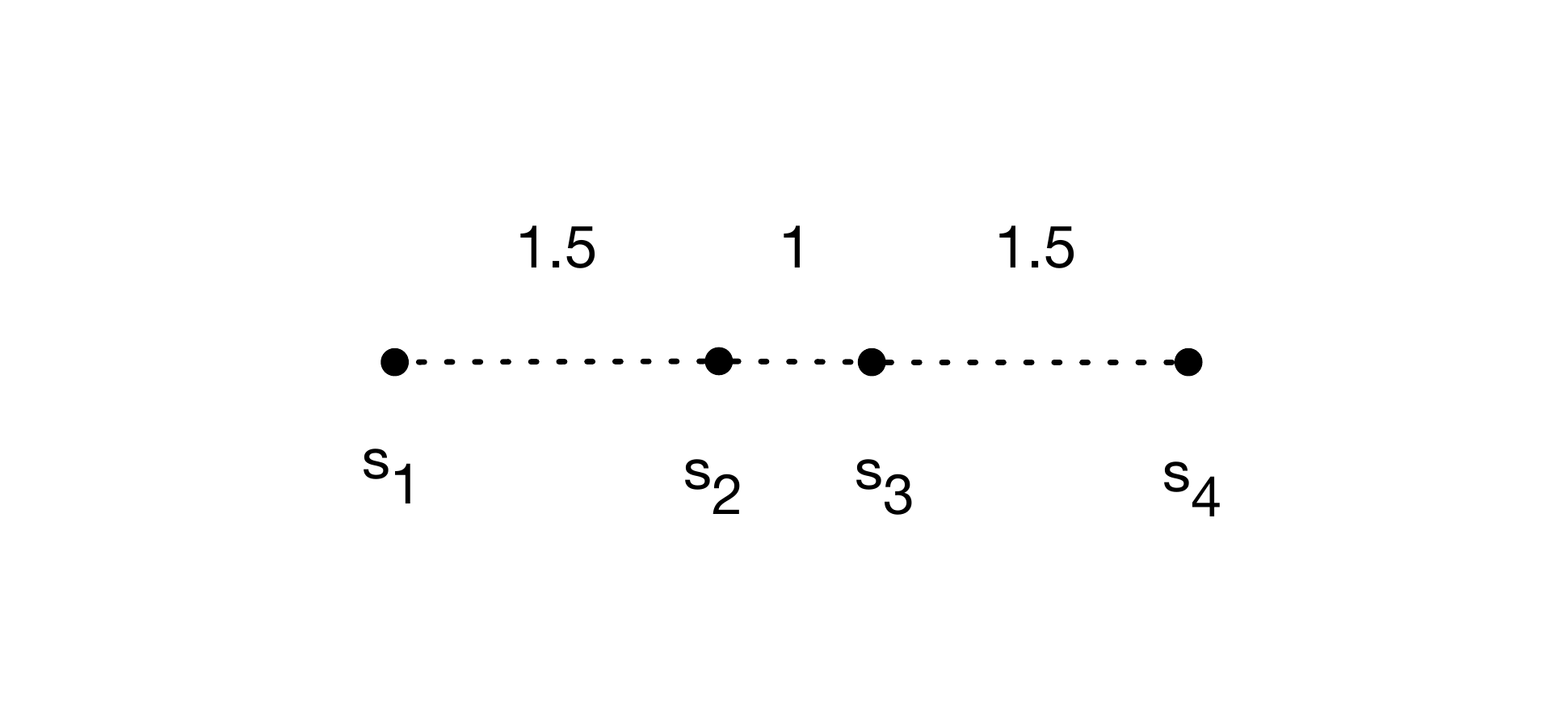}
  \label{fig:alg-ex1}}
\quad
\subfigure[Level -1.]{
  \includegraphics[scale=0.3]{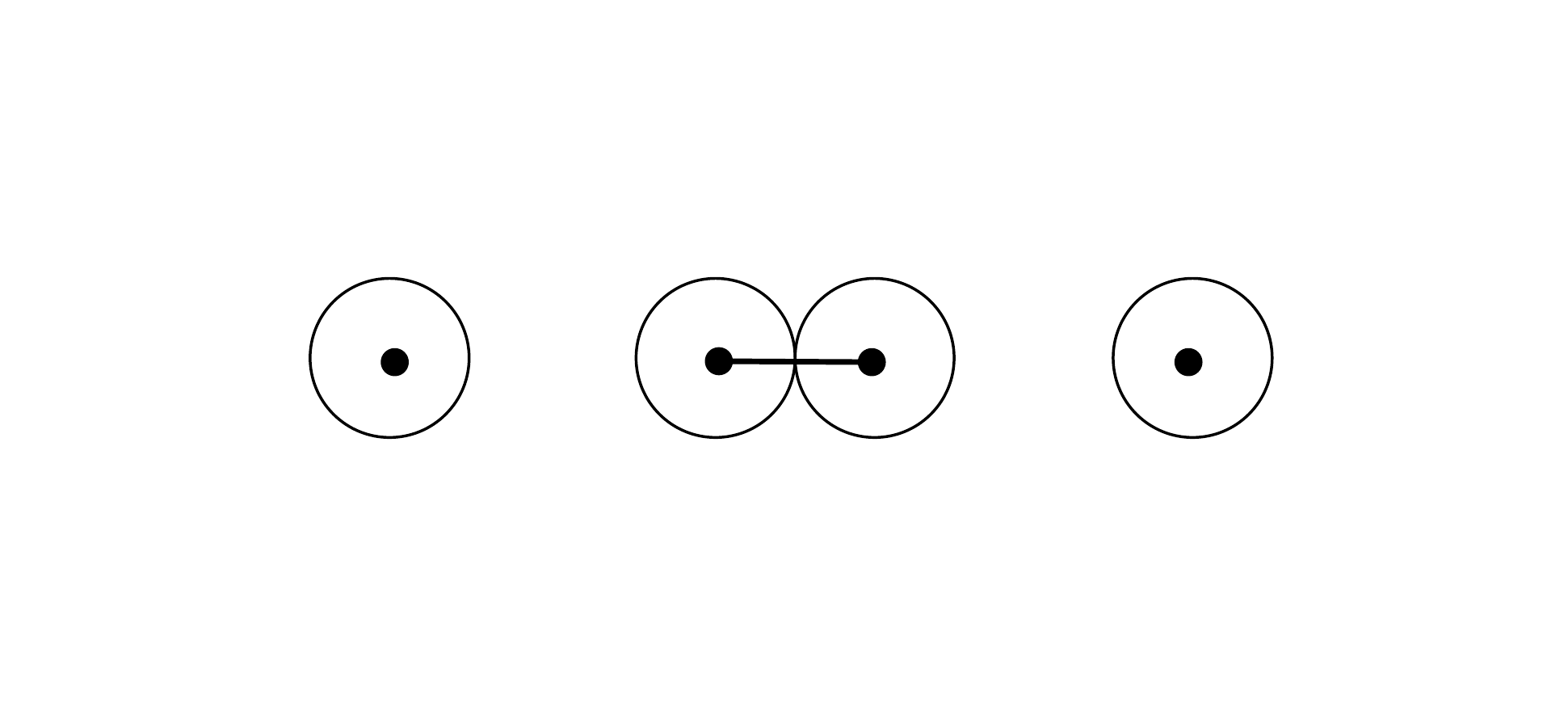}
  \label{fig:alg-ex2}}
\subfigure[Level 0.]{
  \includegraphics[scale=0.3]{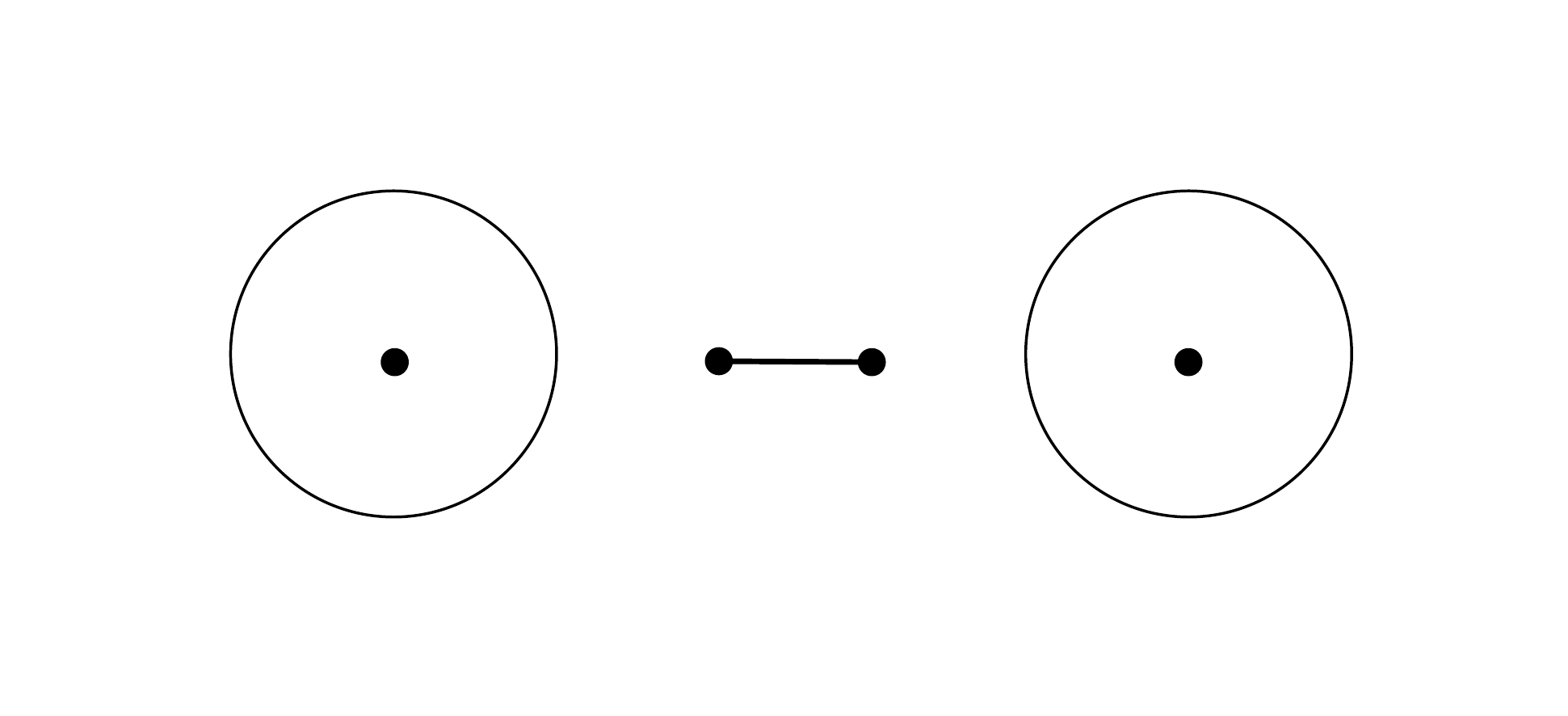}
  \label{fig:alg-ex3}}
\subfigure[Level 1.]{
  \includegraphics[scale=0.3]{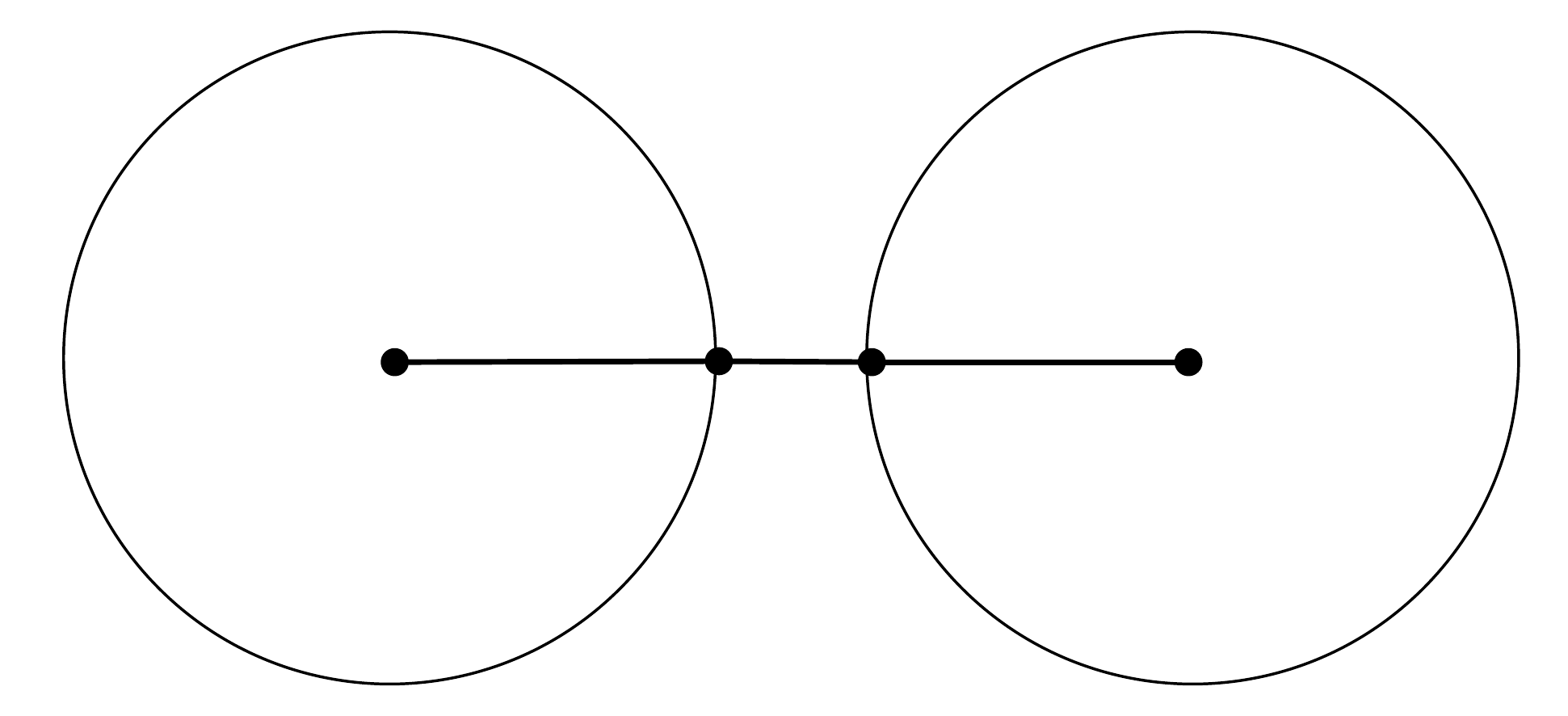}
  \label{fig:alg-ex4}}
  \caption{An example run of the algorithm on a graph with 4 terminals and proper function $g$ such that $g(S) = |S| \mod 2$. The solid edges represent the edges in $F$ and the circles represent the moats at the end of each level.}
  \label{fig:alg-ex}
\end{figure}

\begin{figure}[t]
\begin{center}
{
\algorithm{Algorithm}
{
$F = \emptyset$, $\bar{F}^{j} = \emptyset$ for all $j$, and $y_S^{j} = 0$ for all $j$ and $S\subseteq V$\\
For each $\{0,1\}$-proper function $f_i$ that arrives\\
\> Update active terminals $A$, and active moats $\mathcal{M}$\\
\> For $j \gets -\infty$ to $\infty$\\
\> \> (Consolidate) While there is an edge $\bar e \in F \setminus \bar F^j$\\
\> \> \> Add $\bar e$ to $\bar F^j$\\
\> \> \> While there are terminals $s_1 \in A_j$, $s_2 \in P_j$ in the same moat $S^j$\\
\> \> \> that are not connected in $F$\\
\> \> \> \> Let $p \subseteq E$ be an $s_1$-$s_2$ path in $\bar F^j$ minimizing $\sum_{e \in p-F} c_e$\\
\> \> \> \> $F \gets F \cup \{p\}$, i.e. build edges $p-F$\\
\> \> \> \> Update $A$\\
\> \> Update active moats $\mathcal{M}$\\
 \> \> (Dual growth) While there are terminals active at level $j$\\
 \> \>\> Grow uniformly all active dual variables $y_S^{j}$ until\\
 \> \>\> 1) An active $y_S^j$ reaches its limit in level $j$\\
 \> \>\> 2) An edge $e\in E$ becomes tight in level $j$, then $\bar{F}^{j} = \bar{F}^{j} \cup \{e\}$ \\
\>  \>\> 3) Two terminals $s_1 \in A_j$ and $s_2 \in A_j \cup P_j$ connect in level $j$, then\\
\>  \>\>\>      Let $p \subseteq E$ be the $s_1$-$s_2$ path of edges in $\bar{F}^j$ minimizing $\sum_{e \in p-F} c_e$\\
\>  \>\>\>      $F = F \cup \{p \}$, i.e. build edges $p-F$ \\
\>  \>\>\>     Update $A$\\
  \>\>\> Update active moats $\mathcal{M}$
}}
\end{center}
\caption{Primal-dual algorithm for the online proper constrained forest problem}
\label{alg}
\end{figure}

The algorithm in Figure \ref{alg} can be implemented in polynomial time.  We assume that all edge costs $c_e$ are integers.  Then as a matter of algorithmic implementation, we do not need to maintain levels $j < -1$ or start the loop for $j < -1$, since for such levels dual variables will reach their limits before any edge $e$ can go tight. We show below (in Theorem \ref{termination}) that we do not need to maintain levels $j > \lceil \log_2 (\max_{u,v \in V} d(u,v)) \rceil$ or continue the loop for such values of $j$, where $d(u,v)$ is the distance in $G$ between $u$ and $v$ using edge costs $c_e$; intuitively, we will have generated a feasible solution $F$ in the levels below this one since the dual variables will not reach their limit before all edges in each possible $u$-$v$ shortest path are tight and all terminals will connect.  Thus we need only maintain $O(\log (\max_{u,v \in V} d(u,v)))$ different levels and dual solutions $y^j$, which is polynomial in the input size. Finding the active moats involves computing connected components in the set of tight edges $\bar F^j$ and checking whether each component is a violated set.  In each iteration, we can iterate through all the edges and active dual variables for the current level, of which there are at most a polynomial number, to see which of conditions (1)-(3) will hold first given a uniform increase of the active dual variables.  Since there are at most $|R_i|$ active terminals, and each iteration either reduces the number of active dual variables, makes an additional edge tight, or merges two disjoint moats, there can be at most a polynomial number of iterations for each level.  Since there are at most a polynomial number of levels to consider, the entire algorithm will take polynomial time.

\subsection{The Analysis}

We will now state our main theorem.

\begin{theorem} The algorithm of Figure \ref{alg} is an $O(\log |R_i|)$-competitive algorithm for the online proper constrained forest problem $(IP_i)$.
\label{thm}
\end{theorem}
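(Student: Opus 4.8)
The plan is to follow the Berman--Coulston template, adapted to our multi-level primal--dual setting, proving feasibility and the cost bound separately. \emph{Feasibility:} at the end of time step $i$ the algorithm has run the level loop until no terminal is active at any level (termination is guaranteed by Theorem~\ref{termination}), so every $s\in R_i$ lies in a connected component $X$ of $F$ with $g_i(X)=0$; since $g_i$ is proper, any connected component $C$ of $F$ with $g_i(C)=1$ must contain a terminal (apply $g_i(A\cup B)\le\max(g_i(A),g_i(B))$ to the singletons of $C$), hence in fact $g_i(C)=0$ for \emph{every} component $C$ of $F$, and Lemma~\ref{feas} gives feasibility. This uses the invariant of Lemma~\ref{unique} that an active terminal always lies in an active moat, so that the dual-growth loop can always make progress; this is exactly the purpose of the consolidate loop. \emph{Dual feasibility:} for every level $j$ the vector $y^j$ is feasible for $(D_i)$, since the algorithm raises $y^j_S$ only on moats $S^j$ with $g_i(S^j)=1$ and stops the instant some constraint $\sum_{S:e\in\delta(S)}y^j_S\le c_e$ becomes tight; writing $D_j:=\sum_S g_i(S)\,y^j_S$ and letting $\mathrm{OPT}$ be the optimal value of $(IP_i)$, weak duality gives $D_j\le\mathrm{OPT}(LP_i)\le\mathrm{OPT}$ for all $j$.

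The core step is to show $c(F)=O\!\big(\sum_j D_j\big)$. Every edge enters $F$ exactly once, at some level $j$, as part of a path $p$ built either in the consolidate loop or in case~(3) of the dual-growth loop; the newly built edges $p-F$ are tight at level $j$, so substituting $c_e=\sum_{S:e\in\delta(S)}y^j_S$ and exchanging the order of summation charges $\sum_{e\in p-F}c_e$ to the level-$j$ moats whose boundary $p$ crosses, each paying at most twice its dual value. Because the consolidate loop guarantees that each level-$j$ moat is a union of current components of $F$, a bought path behaves like an augmenting path relative to those moats, and the usual primal--dual bookkeeping — when two terminals connect, charge only the ``smaller'' side and leave the ``larger'' side's budget untouched — shows that the total charge accumulated at level $j$ over the whole execution is $O(D_j)$.

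It remains to bound $\sum_j D_j$. We have $D_j\le\mathrm{OPT}$ from dual feasibility; and since every set carrying positive $y^j_S$ is a moat that contains a terminal while every terminal has level-$j$ budget $\sum_{S\ni s}y^j_S\le 2^j$, we also get $D_j\le|R_i|\,2^j$. Let $j_0=\log_2(\mathrm{OPT}/|R_i|)$. Summing $D_j\le|R_i|\,2^j$ over integers $j\le j_0$ gives a geometric series bounded by $2\,\mathrm{OPT}$. For $j>j_0$ we use $D_j\le\mathrm{OPT}$ together with the fact that the largest level $j$ with $D_j>0$ is at most $\log_2\mathrm{OPT}+O(1)$: a terminal still active at level $j$ has reached its budget at level $j-1$, forcing $D_{j-1}\ge 2^{j-1}$ and hence $\mathrm{OPT}\ge 2^{j-1}$. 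Thus at most $\log_2|R_i|+O(1)$ levels lie above $j_0$, each contributing at most $\mathrm{OPT}$, so $\sum_j D_j=O(\log|R_i|)\cdot\mathrm{OPT}$ and therefore $c(F)=O(\log|R_i|)\cdot\mathrm{OPT}$.

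The main obstacle is making the charging step rigorous: one must show the per-connection cost at level $j$ really is $O(2^j)$ even though a bought path runs through previously purchased $F$-edges and through several moats, and that the ``charge the smaller side'' rule never charges a budget twice as moats grow and merge within a level. A related delicate point is justifying that a terminal remaining active up to level $j$ certifies $\mathrm{OPT}=\Omega(2^j)$, which requires a small case analysis (handling terminals whose components are only later absorbed into a violated set) using properness of the $f_i$'s and the moat structure provided by the consolidate loop. The remainder is the now-standard Berman--Coulston accounting.
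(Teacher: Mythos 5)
Your overall architecture matches the paper's: primal feasibility via Lemma~\ref{feas} (your compressed argument that a violated component must contain a terminal, hence an active one, is a fine substitute for Lemma~\ref{inactive}), dual feasibility of each $y^j$, a charging argument bounding $c(F)$ by $\sum_j D_j$, and finally the observation that only the top $O(\log|R_i|)$ levels matter because $D_j\le|R_i|\,2^j$ while a terminal surviving to level $j$ forces $D_{j-1}\ge 2^{j-1}$ --- this last step is essentially Lemma~\ref{dualvars}. The gap is exactly where you flag it, and it is not a routine one. The per-level claim you sketch --- that the total charge accumulated at level $j$ over the whole execution is $O(D_j)$ --- is \emph{not} what the paper proves, and it is doubtful it can be made to work as stated. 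The paper instead maintains, for each component $X$ of $F$, an account fed by dual growth at \emph{all} levels, and proves the global invariant of Lemma~\ref{cost}: $c(F)+\sum_X\mathrm{Account}(X)=\sum_j\sum_S y^j_S$ together with $\mathrm{Account}(X)\ge 2^{\mathrm{Class}(X)}+\mathrm{TopGrowth}(X)$. When a path merges $X_1$ and $X_2$ at level $j$, the ``smaller'' side pays with $2^{j_1}+\mathrm{TopGrowth}(X_1)$, i.e.\ with the geometric sum of its budgets from all levels below $j_1$ plus its top-level growth --- not with its level-$j$ dual alone. This cross-level payment is essential: in the consolidate loop a terminal that has just become active may sit inside a large pre-existing moat with zero level-$j$ growth of its own (Case~1 in the proof of Lemma~\ref{merge-cost}), so the smaller side's level-$j$ budget is empty, and repeatedly charging the larger side's level-$j$ dual for successive merges risks double-counting.

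The other half of the missing work is the per-connection bound $\sum_{e\in p-F}c_e\le\mathrm{Growth}(X_1,j)+\mathrm{Growth}(X_2,j)$. Your ``substitute $c_e=\sum_{S:e\in\delta(S)}y^j_S$ and exchange summations'' sketch tacitly assumes the bought path crosses each moat boundary $O(1)$ times; but the path lives in $\bar F^j$, which interleaves level-$j$ tight edges with previously purchased $F$-edges injected by the consolidate loop, so the level-$j$ moats are not the active sets of a single Goemans--Williamson run and the standard averaging argument does not apply directly. The paper handles this with an induction over the order in which edges enter $\bar F^j$ (the ``first moat'' argument of Lemma~\ref{path-cost}, then Lemma~\ref{merge-cost}). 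In short: right skeleton and correct endgame, but the technical heart of the proof --- Lemmas~\ref{path-cost}, \ref{merge-cost}, and \ref{cost} --- is absent, and the per-level charging you propose in its place would need to be restructured along the paper's cross-level lines to go through.
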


We begin with a summary of what will follow.   We show in Lemma \ref{lem1} that in each time step, the solution $F$ is a feasible primal solution to the integer program, and each $y^j$ is a feasible dual solution.  As mentioned at the end of the introduction, the basic argument is a charging scheme in which we charge the cost of the edges in $F$ to the dual variables, in such a way that the cost of all the edges is at most the sum of the dual variables $y^j$ summed over all levels $j$.  We will in Lemma \ref{dualvars} show that because the dual growth for each level  $j$ is limited by $2^j$, only the top $O(\log |R_i|)$ levels account for almost all the total dual value; levels below the top $O(\log |R_i|)$ have a negligible amount of dual value.  Recall that the dual solution $y^j$ for each level $j$ is a lower bound on the cost of an optimal solution.  Thus since the cost of the edges in $F$ is essentially at most the value of the dual solutions of the top $O(\log |R_i|)$ levels, and each one is a lower bound on the cost of an optimal solution, the cost of the edges in $F$ are at most a factor of $O(\log |R_i|)$ from the cost of an optimal solution.

In order to perform the charging scheme, we will show in Lemma \ref{moat} that the growth of a dual variable $y_S^j$ can be uniquely credited to some connected component $X$ of the set of edges $F$.  The charging scheme will maintain accounts for all the current connected components of the set of edges $F$.  The key part of the analysis is Lemma \ref{cost}, which shows that  at any point in the algorithm, the total sum of the dual variables $y^j$ summed over all levels $j$ is equal to the cost of the edges currently in $F$ plus the credits in the accounts summed over all the components $X$ of $F$; these accounts will let us pay for adding edges to $F$ in the future.

The proofs below are based on, but substantial generalizations of, those given in Berman and Coulston \cite{BermanC97}.

We can now start the main analysis of the algorithm.  The following lemma is key to both the charging scheme and to proving the termination of the algorithm.  See Figure \ref{moats2} for an illustration.

\begin{lemma} \label{unique} In every iteration of the dual growth loop at level $j$, for each moat $S^j$, the subset of $A_j \cup P_j$ contained in $S^j$ is contained in a unique connected component $X$ of $F$.
\label{moat}
\end{lemma}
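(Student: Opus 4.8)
The plan is to establish, by induction over the elementary actions of the algorithm, the invariant that \emph{for the level $j$ whose consolidate loop has most recently been run in the current time step, and for every moat $S^j$, the terminals of $A_j\cup P_j$ lying in $S^j$ all lie in a single connected component of $F$} --- with the caveat that this may be violated momentarily while the consolidate loop itself is running, but is restored by the time it finishes. The lemma is precisely the restriction of this invariant to moments during the dual-growth loop at level $j$. Along the way I would also record the easy structural fact that, once the consolidate loop at level $j$ has run for the first time, $F\subseteq\bar F^j$ holds forever (building a path $p$ at any level puts $p$ into $F$, and $p$ was already contained in $\bar F^\ell\supseteq F$ at that level, and any edge entering $F$ enters $\bar F^j$), so that every moat $S^j$ is always a union of connected components of $F$.

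Most actions obviously preserve the invariant. Growing dual variables changes neither $F$ nor $\bar F^j$ nor the memberships of $A_j,P_j$; moving a terminal from $A_j$ to $P_j$ leaves $A_j\cup P_j$ unchanged; and building a path $p$ during the dual-growth loop at level $j$ only merges connected components of $F$ and, since $p\subseteq\bar F^j$ already, does not change any moat of $\bar F^j$, so it can only make the invariant easier to satisfy. When an edge $e$ goes tight at level $j$ and is added to $\bar F^j$, two moats $S_1^j,S_2^j$ merge. Since $e$ became \emph{newly} tight, some active dual $y_S^j$ with $e\in\delta(S)$ was just raised, so at least one of $S_1^j,S_2^j$ is an active moat; using that it is a union of $F$-components and that $g_i$ is proper, it contains an $F$-component $C$ with $g_i(C)=1$ and hence a terminal that is active and --- because the moat's dual is still below its limit --- has not reached its level-$j$ limit, so that terminal lies in $A_j$. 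By the inductive hypothesis the $A_j\cup P_j$-terminals of $S_1^j$ lie in one $F$-component and those of $S_2^j$ in one $F$-component; if these agree there is nothing to do, and otherwise the merged moat contains $s_1\in A_j$ and $s_2\in A_j\cup P_j$ in distinct $F$-components, so condition~(3) of the dual-growth loop applies and the algorithm joins $s_1$ and $s_2$ in $F$. Iterating, the invariant is restored for the merged moat before the loop continues.

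The substantive step --- and the one I expect to be the main obstacle --- is showing that the consolidate loop at level $j$ restores the invariant at level $j$. When this loop begins, $A_j$ has just been repopulated with the terminals still active at the end of level $j-1$, while $\bar F^j$ may already contain tight edges laid down by dual growth at level $j$ in earlier time steps, so the invariant can genuinely be false at the outset. The loop first adds all of $F$ to $\bar F^j$ (after which every moat is a union of $F$-components), and then, while some $s_1\in A_j$ and $s_2\in P_j$ share a moat but not an $F$-component, joins them by a path in $F$; since each such step strictly reduces the number of $F$-components contained in moats that carry a mixed $A_j$/$P_j$ pair, the loop terminates. On termination, fix a moat $S^j$ and let $T=(A_j\cup P_j)\cap S^j$. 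If $T$ meets both $A_j$ and $P_j$, then the exit condition forces every element of $T\cap A_j$ to be $F$-connected to every element of $T\cap P_j$, so $T$ lies in one $F$-component. The delicate cases are $T\subseteq A_j$ and $T\subseteq P_j$; here the plan is to trace how $S^j$ was assembled: every tight edge of $\bar F^j$ inside $S^j$ was created during a dual-growth loop at level $j$ in a strictly earlier time step, so the inductive hypothesis applied at the end of that earlier loop, together with the monotonicity of $F$ (edges are never deleted) and the permanence of membership in $P_j$, should let one conclude that the terminals of $T$ were already then confined to a single moat whose $A_j\cup P_j$-terminals lay in one $F$-component, and hence are still contained in one $F$-component now. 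Combining the cases yields the invariant at level $j$ and completes the induction.

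The real difficulty in that last step is bookkeeping: one must reconcile what the sets $A_j$ and $P_j$ were at the earlier time steps when the relevant tight edges were created with what they are now, and keep track of the fact that a single level-$j$ moat can be built up out of tight edges contributed over many earlier time steps. Once this is handled, the invariant holds at every moment --- in particular throughout each iteration of the dual-growth loop at level $j$ --- which is the assertion of the lemma.
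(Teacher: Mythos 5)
Your approach is the same as the paper's: the published proof is two sentences asserting exactly the invariant you set out to establish, namely that the consolidate loop together with condition~(3) of the dual-growth loop guarantees that any $s_1,s_2\in A_j\cup P_j$ connected in $\bar F^j$ during the dual-growth loop are also connected in $F$. Your write-up is a careful expansion of that assertion, and the ``delicate case'' you flag ($T\subseteq A_j$ or $T\subseteq P_j$ after consolidation, since the consolidate loop's guard only fires on mixed $A_j$--$P_j$ pairs) is precisely what the paper's proof elides.

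Two remarks on the part you leave as a plan. First, the $T\subseteq A_j$ case does not need the historical bookkeeping you anticipate: condition~(3) is stated as a condition on the current state (``there is a path of edges between them that are either tight or in $F$''), so it fires with zero dual growth at the very first iteration for any pair $s_1\in A_j$, $s_2\in A_j\cup P_j$ already connected in $\bar F^j$ but not in $F$; hence no such pair survives into an iteration of the dual-growth loop. Second, for $T\subseteq P_j$, the cleaner argument is that membership in $P_j$ is permanent and $\bar F^j$ and $F$ only grow, so one may look at the moment the two terminals first became $\bar F^j$-connected while both were terminals: at that moment at least one of them was still in $A_j$ (or became a terminal only later, in which case the next consolidate loop at level $j$ catches the pair as an $A_j$--$P_j$ pair), and the algorithm joined them in $F$ then. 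Finally, your claimed structural fact that $F\subseteq\bar F^j$ ``holds forever'' after the first consolidate loop at level $j$ is too strong --- an edge added to $F$ during the dual-growth loop at a level $\ell>j$ later in the time step need not be tight at level $j$, which is exactly why the consolidate loop must be rerun at level $j$ in the next time step --- but the containment does hold throughout level $j$'s own dual-growth loop, which is all the lemma requires.
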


\begin{proof} 
The proof follows directly from the algorithm. The consolidate loop and Step (3) of the dual growth loop ensures that whenever $s_1, s_2 \in A_j \cup P_j$ are connected in $\bar{F}^j$ during the dual growth loop, then they are connected in $F$ as well. Since each moat $S^j$ is a connected component of $\bar{F}^j$, the terminals of $A_j \cup P_j$ contained in the moat $S^j$ are contained in a unique connected component $X$ of $F$.  
\end{proof}

\begin{figure}[t]
	\centering
	\includegraphics[width=0.5\textwidth]{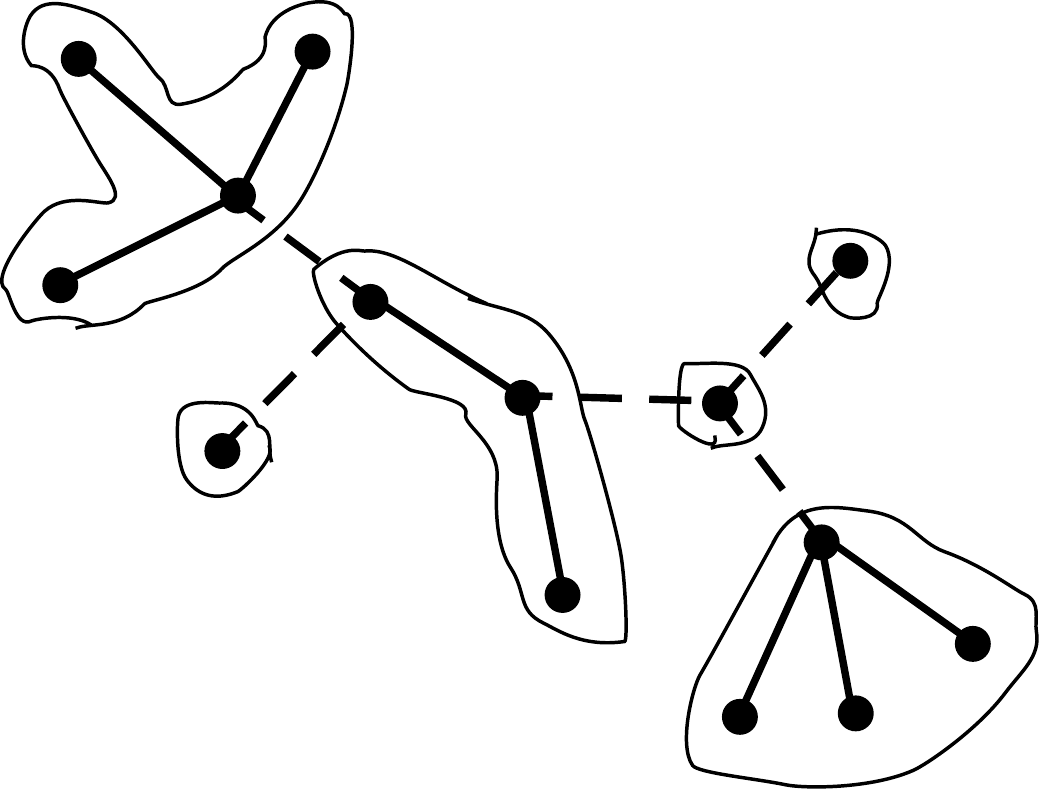}\\
	\caption{A moat $S$ with the connected components of $F$ circled.  Lemma \ref{unique} asserts that in the dual growth loop, all active and previously inactive vertices in the moat will be in exactly one of these components.  Lemma \ref{active} observes that because this is the case, whether the moat $S$ is active depends exactly on whether any terminal in this one component is active.}\label{moats2}
\end{figure}

We now turn to showing that the algorithm is well-defined and that it terminates.  We need the following lemma to begin.

\begin{lemma} \label{inactive}
At any time during the execution of the algorithm, if a connected component $X$ of $F$ has no active terminal in it, then $g_i(X) = 0$.
\end{lemma}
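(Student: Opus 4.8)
The plan is to argue the contrapositive: if $X$ is a connected component of $F$ at some moment of the execution and $g_i(X)=1$, then $X$ must contain an active terminal. The first step is cheap: because $X$ is a connected component of $F$, no edge of $F$ leaves $X$, so $|\delta(X)\cap F|=0<1=g_i(X)$, i.e.\ $X$ is a violated set for $g_i$ by $F$.

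The second step extracts a terminal from $X$ using properness. Listing $X=\{v_1,\dots,v_m\}$ and applying $g_i(A\cup B)\le\max(g_i(A),g_i(B))$ repeatedly to the disjoint singletons gives $g_i(X)\le\max_{\ell}g_i(\{v_\ell\})$, so some $v_\ell$ has $g_i(\{v_\ell\})=1$; that is, $v_\ell\in R_i$ is a terminal contained in $X$.

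The third step, and the one requiring the most care, is to conclude that $v_\ell$ is in fact active at this moment. Here I would invoke the bookkeeping built into the algorithm: at the start of time step $i$ the set $A$ of active terminals is defined to be exactly the terminals that lie in a violated component of the current $F$, and each ``Update $A$'' step restores this property after the only kind of change $F$ undergoes (adding a path during the consolidate loop or during Step~(3) of the dual growth loop). In particular, if an inactive terminal's component is later merged into a violated one, the ``Update $A$'' step reactivates it; and a terminal that has merely reached its limit at the current level and been shifted from $A_j$ to $P_j$ is still active in this global sense. Granting this invariant, $v_\ell\in X$ together with the violatedness of $X$ from the first step forces $v_\ell\in A$, contradicting the hypothesis that $X$ has no active terminal. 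Hence $g_i(X)=0$, as claimed.

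I expect the main obstacle to be precisely this third step: pinning down that the ``Update $A$'' operations keep $A$ synchronized with ``terminals contained in violated components of $F$'' at all times, including after merges that can reactivate a previously inactive terminal — the first two steps are just the definition of a violated set plus a single use of properness. If one prefers not to reason about the pseudocode's updates directly, the same conclusion follows by carrying the invariant $A=\{s\in R_i:\ \text{the component of }F\text{ containing }s\text{ is violated}\}$ through the execution by induction, with the base case being the definition of $A$ at the start of the time step and the inductive step handling each path addition.
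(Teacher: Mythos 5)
Your first two steps are fine (the contrapositive framing, the observation that a component with $g_i(X)=1$ is violated, and the properness argument extracting a terminal $v_\ell\in X$ with $g_i(\{v_\ell\})=1$). The gap is in step 3: the invariant you invoke --- that $A$ is at all times exactly the set of terminals lying in violated components of $F$, so that \emph{any} terminal in a violated component is active --- is stronger than what the algorithm maintains, and is in fact false. During a time step the algorithm only ever \emph{deactivates} terminals (a terminal becomes inactive when its component $X'$ satisfies $g_i(X')=0$); reactivation occurs only when $A$ is recomputed at the start of the next time step. An inactive terminal can therefore end up inside a violated component without being reactivated: in the generalized Steiner tree setting, if $X_2=\{c,d\}$ is an already-connected pair (so $g_i(X_2)=0$ and $c,d$ are inactive) and a path joining two still-active terminals $a$ and $e$ happens to pass through $X_2$, the merged component $\{a,c,d,e\}$ may still be violated while $c$ and $d$ stay inactive. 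So the inductive step of the synchronization invariant you propose would fail, and the weaker invariant that \emph{is} true --- every violated component contains \emph{at least one} active terminal --- is precisely the lemma being proved, so appealing to it would be circular. The properness extraction in step 2 gives you \emph{some} terminal in $X$, but not one you can certify is active.

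The paper's proof sidesteps this by choosing the witness from the history of $X$'s formation rather than by properness: either $X$ is a singleton (trivial), or $X$ was created by adding an $s_1$-$s_2$ path whose endpoint $s_1$ was active immediately before the merge; since the only deactivation rule is ``the component containing $s$ has $g_i$-value $0$,'' the hypothesis that $X$ now contains no active terminal forces $g_i(X)=0$. To repair your argument, replace step 3 with this formation-time argument, or run your induction on path additions tracking a single active endpoint of each added path rather than asserting that all terminals in violated components are active.
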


\begin{proof}
  There are two cases to consider: (1) $X$ is a singleton set; (2) $X$ was formed by adding a $s_1$-$s_2$ path $p$ to $F$ that connects several smaller components. The statement clearly holds for case (1). Let us now consider case (2). By definition of the algorithm, at least one of $s_1$ or $s_2$ was an active terminal before $p$ was added to $F$. Suppose $s_1$ was the active terminal. After $p$ was added to $F$, $s_1$ is contained in the new component $X$ but is no longer active. By definition of the algorithm, this can only happen if $g_i(X) = 0$. Thus, the statement holds in case (2) as well.
 \end{proof}

The following lemma shows that the algorithm is well-defined.

\begin{lemma} \label{active}
In every iteration of the dual growth loop at level $j$, if a terminal is active, then it is contained in a moat $S^j$ that is active. In particular, the dual variable $y_S^j$ is active.
\end{lemma}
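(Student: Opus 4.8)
The plan is to read off the structure of the moat $S^j$ directly from the algorithm, using Lemmas~\ref{moat} and~\ref{inactive} as the main tools. Fix a level $j$, and at the start of some iteration of its dual growth loop let $s$ be an active terminal; then $s$ lies in a connected component $X$ of $F$ with $g_i(X)=1$. I would first record the invariant that $F\subseteq\bar F^j$ throughout the dual growth loop at level $j$: the consolidate loop adds all of $F$ to $\bar F^j$ at the start of the level, and every edge that Step~(3) later adds to $F$ lies on a path whose edges are tight at level $j$ (hence already in $\bar F^j$) or already in $F$. Consequently $X$ lies inside a unique moat $S^j$, and $S^j$ decomposes as $X\cup C_1\cup\dots\cup C_m$ for some other connected components $C_1,\dots,C_m$ of $F$. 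The goal is to show $S^j$ is active, i.e.\ $g_i(S^j)=1$ and $y^j_{S^j}$ has not reached its limit.

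For the first property I would use the observation that any connected component of $F$ currently containing an active terminal also contains a terminal of $A_j\cup P_j$ --- either that active terminal has been active since the start of level $j$ and so lies in $A_j$, or its component was grown by a path built in Step~(3) or in the consolidate loop, which is always anchored at an $A_j$ terminal, and components of $F$ only grow. In particular $X$ contains a terminal of $A_j\cup P_j$, so by Lemma~\ref{moat}, which confines all terminals of $A_j\cup P_j$ lying in $S^j$ to one connected component of $F$, that component is $X$; and any $C_\ell$ containing an active terminal would likewise be forced to equal $X$, a contradiction. Hence no $C_\ell$ contains an active terminal, so Lemma~\ref{inactive} gives $g_i(C_\ell)=0$, and repeated use of the disjointness property of $g_i$ gives $g_i(C)=0$ where $C:=C_1\cup\dots\cup C_m$. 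Now observe that $V-X=C\cup(V-S^j)$ is a disjoint union; complementarity and disjointness of $g_i$ then give
\[
1 = g_i(X) = g_i(V-X) \le \max\bigl(g_i(C),\,g_i(V-S^j)\bigr) = g_i(V-S^j) = g_i(S^j),
\]
so $g_i(S^j)=1$ (the case $m=0$ being trivial).

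For the limit condition, suppose some terminal $s'\in S^j$ has $\sum_{S\ni s'}y^j_S=2^j$; then $s'$ was moved into $P_j$ when it reached its limit, so $s'\in A_j\cup P_j$, and Lemma~\ref{moat} puts $s'$ in the same component $X$ as $s$. It remains to contradict the fact that $s$ is active, so $\sum_{S\ni s}y^j_S<2^j$. Since dual variables at level $j$ are raised uniformly over all active moats, one shows --- by an induction running alongside this lemma, over the dual-growth steps and path builds within level $j$ --- that the currently active terminals inside any moat all share one common level-$j$ dual value, and that this value dominates the level-$j$ dual value of every terminal of that moat that is no longer active. Applied to $S^j$, this forces $\sum_{S\ni s}y^j_S\ge\sum_{S\ni s'}y^j_S=2^j$, contradicting that $s$ is active. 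Hence $y^j_{S^j}$ has not reached its limit, so $S^j$, and therefore the dual variable $y^j_{S^j}$, is active.

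I expect the limit condition to be the main obstacle: the ``synchronization'' claim --- that the currently active terminals of a moat carry a common level-$j$ dual value dominating that of the inactive terminals --- must be proved by a careful induction that tracks how the $y^j_S$ evolve as moats merge and as terminals pass from $A_j$ to $P_j$, whereas $g_i(S^j)=1$ follows from an essentially one-shot use of properness. That properness step is itself worth stating explicitly, since the disjointness axiom only bounds $g_i$ of a union from above, and one has to pass to complements ($X\mapsto V-X$, $S^j\mapsto V-S^j$) to convert it into the lower bound $g_i(S^j)\ge g_i(X)$ that we need.
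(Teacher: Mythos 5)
Your argument for the first half of the claim, $g_i(S^j)=1$, is essentially identical to the paper's proof: both note that the consolidate loop forces $F\subseteq\bar F^j$ so that $S^j$ is a union of connected components of $F$, invoke Lemma~\ref{unique} to confine the terminals of $A_j\cup P_j$ in $S^j$ to a single component $X$ with $g_i(X)=1$, apply Lemma~\ref{inactive} and the disjoint-union property to get $g_i(Z)=0$ for the union $Z$ of the remaining components, and then pass to complements, $g_i(X)=g_i(V-X)\le\max\bigl(g_i(V-S^j),g_i(Z)\bigr)=g_i(S^j)$, to conclude. That part is correct and needs no changes; your preliminary observation that every currently active terminal lies in $A_j\cup P_j$ can be shortened (within a time step terminals only become inactive, so anything active during level $j$ was active when level $j$ began and hence is in $A_j$, or has since migrated to $P_j$).

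Where you diverge is the second half, verifying that $y^j_{S^j}$ has not reached its limit. The paper's proof stops after establishing $g_i(S^j)=1$ and immediately declares the moat active; it does not address the limit clause at all. You are right that the definition of an active moat has this second condition, but your treatment of it is not a proof: the ``synchronization'' claim --- that the currently active terminals of a moat carry a common level-$j$ dual value dominating that of every retired terminal in the moat --- is asserted and explicitly deferred to an induction you do not carry out. It is also delicate as stated, because the vectors $y^j$ persist across time steps: a terminal arriving in a later time step begins level $j$ with whatever level-$j$ dual already accumulated around its vertex, and when its growing moat absorbs a moat that was retired at its limit in an earlier time step, there is no immediate reason its own level-$j$ dual dominates $2^j$. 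So, measured as a self-contained argument, your write-up has a genuine gap in that half; measured against the paper, you have identified (but not closed) a condition that the paper's own proof leaves unverified. If your goal is to reproduce the paper's argument, drop the limit discussion; if your goal is a complete proof of the statement as literally defined, the synchronization invariant must actually be formulated and proved.
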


\begin{proof}
Let $s$ be an active terminal and $S^j$ be the moat containing $s$.  Observe that $S^j$ is a union of some connected components of $F$; this is because the algorithm ensures that $\bar{F}^j$ contains $F$ and moats are connected components of $\bar{F}^j$. By Lemma \ref{moat}, there is a unique connected component $X$ of $F$ in $S^j$  that contains all the active terminals in $S^j$. Because $s$ is active and is contained in $X$, we have $g_i(X) = 1$ (since otherwise $s$ would become inactive).  For any other connected component $X'$ of $F$ contained in $S^j$, $X'$ does not contain any active terminal so $g_i(X') = 0$ by Lemma \ref{inactive}. Let $Z$ be the union of these connected components $X'$; then by the definition of proper functions it must be that $g_i(Z) \leq \max_{X'} g_i(X') = 0$. So we have that $S^j$ is partitioned into sets $X$ and $Z$. Because $g_i$ is proper, $g_i(S^j)=g_i(V-S^j)$, and $V-S^j$ and $Z$ partition $V - X$, so that $g_i(X) = g_i(V-X) \leq \max(g_i(V-S^j), g_i(Z)) = \max(g_i(S^j),g_i(Z)).$  Thus $g_i(S^j)=0$ would imply $g_i(X)=0$, a contradiction.  Thus $g_i(S^j)=1$ and $S^j$ is an active moat. 
\end{proof}

Finally, we can prove that the algorithm terminates and that it does not use any level beyond level $\lceil \log_2 (\max_{u,v \in V} d(u,v)) \rceil$.

\begin{theorem}  The algorithm terminates in each time step $i$, and will find a feasible solution before it reaches a level greater than $\lceil \log_2 (\max_{u,v \in V} d(u,v)) \rceil$. \label{termination}
\end{theorem}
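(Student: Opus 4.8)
The plan is to split the statement into three parts: (a) the work the algorithm does at any single level terminates; (b) as soon as no terminal is active, $F$ is feasible for $g_i$; and (c) no terminal ever reaches its limit at level $L := \lceil \log_2(\max_{u,v\in V} d(u,v)) \rceil$, so that the processing of level $L$ ends with no active terminal. Parts (a) and (b) together will show that the algorithm does a finite amount of work in time step $i$ and outputs a feasible solution the moment no terminal is active, and part (c) will pin that moment down to be no later than the end of level $L$. (Throughout I will assume the instance is feasible, so that $\max_{u,v\in V} d(u,v)$ is finite.)

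For (a), I would argue that the consolidate loop at level $j$ runs finitely: each outer iteration moves one edge from $F\setminus\bar F^j$ into $\bar F^j$, which happens at most $|E|$ times, while the paths built in the inner loop lie inside $\bar F^j$, so $F\setminus\bar F^j$ never grows; meanwhile each inner iteration connects two previously-unconnected components of $F$, which can happen only a bounded number of times. Likewise, the dual-growth loop at level $j$ runs finitely because every one of its iterations either permanently removes a terminal from $A_j$ (case~1), permanently adds an edge to $\bar F^j$ (case~2), or permanently merges two components of $F$ (case~3). For (b), suppose no terminal is active and let $C$ be a connected component of $F$ with $g_i(C)=1$; writing $C$ as a disjoint union of its singletons and applying the subadditivity part of properness repeatedly yields some $v\in C$ with $g_i(\{v\})=1$, and since $C$ is a connected component of $F$ it is a violated set, so $v$ is an active terminal, a contradiction. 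Hence $g_i(C)=0$ for every component $C$ of $F$, and Lemma~\ref{feas} shows that $F$ satisfies $g_i$.

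The heart of the proof is (c). Suppose, for contradiction, that at some instant of the execution (in some time step) a terminal $s$ reaches its level-$L$ limit, and consider the \emph{first} such instant; then $\sum_{S\ni s} y^L_S = 2^L$, and since this quantity is strictly increasing at that instant, $s$'s current moat $S^*$ (its connected component in $\bar F^j$ at level $j=L$) must be an active moat, so the then-current connectivity function takes value $1$ on $S^*$, and hence $S^*\ne V$. Fix any vertex $w\notin S^*$. The family $\{S : y^L_S > 0\}$ is laminar, because every such $S$ was a connected component of $\bar F^j$ at level $L$ at some earlier time and the components of $\bar F^j$ only merge as the algorithm runs; consequently every positive-dual set containing $s$ is contained in $S^*$ and therefore excludes $w$, which gives
\[
2^L = \sum_{S \ni s} y^L_S \le \sum_{S : |S \cap \{s,w\}| = 1} y^L_S \le d(s,w) \le 2^L ,
\]
where the middle inequality is dual feasibility summed along a shortest $s$-$w$ path. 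Equality throughout forces every edge of that shortest path to be tight at level $L$, which places $s$ and $w$ in the same component of $\bar F^j$ and contradicts $w\notin S^*$. So no terminal ever reaches its level-$L$ limit; hence during the level-$L$ dual-growth loop a terminal can leave $A_L$ only by becoming inactive, and by (a) that loop terminates -- necessarily with $A_L = \emptyset$, i.e.\ with no active terminal. By (b), $F$ is then feasible for $g_i$, so the algorithm has found a feasible solution by the end of level $L$.

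I expect the main obstacle to be making (c) precise: one must establish laminarity of the positive level-$L$ dual variables \emph{across all time steps}; one must be able to choose $w$ outside $s$'s moat, which is exactly why it is important to examine the \emph{first} instant at which $s$ would reach its limit (at that instant $s$'s moat is active, hence a proper subset of $V$); and one must handle the borderline case in which $\max_{u,v} d(u,v)$ is itself a power of two, so that $d(s,w)=2^L$ is not by itself a contradiction -- this is precisely where the ``every shortest-path edge is tight'' deduction is needed. Parts (a) and (b) are routine, but (a) requires care about how the consolidate and dual-growth loops interact (in particular that building a path inside $\bar F^j$ does not feed new edges back into $F\setminus\bar F^j$), and one should also check that the algorithm needs no level below $-1$ since there the dual limit $2^j$ is smaller than every positive edge cost.
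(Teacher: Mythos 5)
Your proposal is correct, and its overall skeleton (finiteness of each level's loops, feasibility once no terminal is active, and a diameter argument pinning the last relevant level at $L=\lceil\log_2\max_{u,v}d(u,v)\rceil$) is the same as the paper's. The one genuine structural difference is in the diameter step. The paper lets terminals reach their level-$L$ limits and argues forward: reaching the limit forces all shortest paths among the surviving active terminals to be tight, so by Lemma~\ref{unique} they all lie in one component $X$ of $F$; Lemma~\ref{inactive} gives $g_i(X')=0$ for every other component, and properness ($g_i(X)=g_i(V\setminus X)$) then forces $g_i(X)=0$, contradicting activity. You instead show that no terminal can ever reach its level-$L$ limit, via laminarity of the positive level-$L$ dual sets plus dual feasibility summed along a shortest $s$-$w$ path for a $w$ outside the current moat. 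Your route makes rigorous a step the paper only asserts (``these terminals cannot reach their limit until all the edges of all shortest paths are tight''), and it localizes the contradiction to a single terminal rather than to the whole set of survivors; the price is that you must justify laminarity across time steps and choose $w\notin S^*$, which is why you need $g(S^*)=1\Rightarrow S^*\neq V$. The paper's route avoids those bookkeeping points by leaning on the properness identity $g_i(X)=g_i(V\setminus X)$, which your argument never needs. One small caution in your part (b): ``$v$ lies in a violated component, hence $v$ is active'' is the static definition of activity from the start of a time step; mid-step, activity is maintained dynamically, and the implication ``$g_i(C)=1\Rightarrow C$ contains an active terminal'' is exactly the content of Lemma~\ref{inactive}, which the paper proves by induction on how components are formed. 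You should cite or reprove that invariant rather than appeal to the definition.
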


\begin{proof}
First, we argue that the consolidate and dual growth loops at each level must terminate. We observe that each iteration through the consolidate loop at level $j$ joins two components of $F$; once we have merged $n$ components, we have a tree spanning all vertices, which is a feasible solution to the problem, so there can be at most $n$ iterations of the consolidate loop. Each iteration through the dual growth loop either adds a tight edge to a level, joins two components of $F$, or causes a terminal to reach its limit on the level; thus there can be at most $m + n + |R_i|$ iterations through the dual growth loop at any level.

Next, we show that the algorithm terminates by level $j = \lceil \log_2 (\max_{u,v \in V} d(u,v)) \rceil$. Suppose, towards a contradiction, that this is not the case. Then, by Lemmas \ref{feas} and \ref{inactive}, there are some active terminals $A$ at the end of level $j$. These terminals cannot reach their limit until all the edges of all shortest $u$-$v$ paths are tight (for all pairs $u,v \in A$). Thus, $A$ is connected in $\bar{F}^j$, and so by Lemma \ref{unique}, there is a connected component $X$ of $F$ containing $A$. By Lemma \ref{inactive}, every other connected component $X'$ of $F$ has $g_i(X') = 0$ since it does not contain any active terminal. Since the union of these components is $V \setminus X$, by the definition of proper functions, we have $g_i(V \setminus X) = 0$. But then $g_i(X) = g_i(V \setminus X)$ as well, so every connected component $C$ of $F$ has $g_i(C) = 0$, and thus $F$ is feasible by Lemma \ref{feas}. Therefore, by level $\lceil \log_2 (\max_{u,v \in V} d(u,v)) \rceil$ we have found a feasible solution for $(P_i)$, and step $i$ must terminate.
 \end{proof}

\begin{theorem} \label{lem1} At the end of time step $i$ of the algorithm in Figure \ref{alg}, $F$ is a feasible solution to $(IP_i)$ and each dual vector $y^{j}$ is a feasible solution to $(D_i)$.
\end{theorem}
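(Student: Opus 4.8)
The statement splits into a primal claim and a dual claim, which I would treat separately.

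For primal feasibility I would essentially invoke Theorem~\ref{termination}. That theorem already guarantees the algorithm terminates in time step $i$, and the argument in its proof in fact shows that at termination every connected component $C$ of $F$ has $g_i(C)=0$ — otherwise there would remain an active terminal and the proof derives a contradiction. I would add the small observation that $F$ only grows and $g_i$ is proper, so once every component has $g_i$-value $0$ this persists (a later component is a union of earlier ones, and $g_i$ of a union is at most the maximum of $g_i$ over the pieces). Then Lemma~\ref{feas} immediately gives that $F$ satisfies $g_i$, and taking $x$ to be the indicator vector of $F$ gives $\sum_{e\in\delta(S)}x_e=|\delta(S)\cap F|\ge g_i(S)$ for all $S$, i.e.\ a feasible solution to $(IP_i)$.

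For dual feasibility, fix a level $j$; I need $\sum_{S:\,e\in\delta(S)}y_S^j\le c_e$ for every $e$ (nonnegativity is trivial since duals start at $0$ and only increase). The plan rests on two invariants read straight off the pseudocode: (i) the algorithm raises $y_S^j$ only while $S^j$ is a current moat, i.e.\ a connected component of $\bar{F}^j$; and (ii) $\bar{F}^j$ never loses edges over the whole run — across consolidate loops, dual growth loops, and time steps — since it is the union of $F$ (which only grows) with the level-$j$ tight edges (which stay tight because duals never decrease). Given these, I would track the partial sum $\sigma_e^j:=\sum_{S:\,e\in\delta(S)}y_S^j$ for a fixed $e$: while $e\notin\bar{F}^j$ it is $<c_e$ by definition of tightness, and the instant $\sigma_e^j$ would hit $c_e$ the edge $e$ goes tight and is inserted into $\bar{F}^j$; by (ii) it then stays in $\bar{F}^j$ forever, so both endpoints of $e$ lie in the same component of $\bar{F}^j$, hence $e\notin\delta(S)$ for every moat $S^j$, whence by (i) no term of $\sigma_e^j$ is ever raised again and $\sigma_e^j$ is frozen at a value $\le c_e$. (If $e$ instead enters $\bar{F}^j$ via $F$ before going tight, $\sigma_e^j$ freezes strictly below $c_e$; either way the constraint holds.) Since the constraints of $(D_i)$ do not depend on $i$, this gives feasibility of $y^j$ for $(D_i)$ at the end of every time step.

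I expect the only delicate point to be the rigorous justification of invariant (ii): verifying that neither the consolidate loop, nor the ``update moats'' steps, nor the transition between time steps can ever delete an edge from $\bar{F}^j$ or cause a dual variable to be raised for a set that is not currently a connected component of $\bar{F}^j$. Everything else is bookkeeping, and once (i) and (ii) are established both halves of the theorem follow at once from Theorem~\ref{termination} and Lemma~\ref{feas}.
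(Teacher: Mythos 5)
Your proposal is correct and follows essentially the same route as the paper: primal feasibility via the termination condition (no active terminals, hence $g_i(C)=0$ for every component by Lemma \ref{inactive}) together with Lemma \ref{feas}, and dual feasibility "by construction" because condition (2) of the dual growth loop halts growth exactly when an edge goes tight and absorbs it into $\bar{F}^j$, after which the edge never again lies in $\delta(S)$ for any moat. Your invariants (i) and (ii) are just a careful spelling-out of what the paper dismisses in one sentence, and they do hold since $\bar{F}^j$ only gains edges and duals only increase.
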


\begin{proof} Our algorithm terminates each time step $i$ when there are no active terminals, and thus by Lemma \ref{inactive}, for each connected component $X$ of $F$, $g_i(X) = 0$.  Thus by Lemma \ref{feas}, the solution is feasible for $(P_i)$.  By construction of the algorithm each dual solution $y^j$ is feasible for $(D_i)$ since we stop growing a dual $y^j_S$ if it would violate a dual constraint.
 \end{proof}

We now turn to analyzing the cost of the solution returned by the algorithm.  As discussed previously, in order to give a bound on the total cost of edges in $F$, we create an account for each connected component $X$ in $F$, denoted $\mathrm{Account}(X)$. We will define a shadow algorithm to credit potential to accounts as duals are increased and remove potential from accounts to pay for building edges. We will show that the total cost of edges in $F$ plus the total unused potential remaining in all accounts is always equal to the sum of all dual variables over all levels, i.e. $\sum_j\sum_S y_S^{j}$.

Our shadow algorithm works as follows.  First, whenever we increase an active dual variable $y_S^{j}$, we will credit the amount of increase to $\mathrm{Account}(X)$, where $X$ is the unique connected component in $F$ that contains all terminals in the moat $S^j$ that are in $P_j \cup A_j$, as given by Lemma \ref{moat}. Second, whenever the algorithm builds a path $p$ in $F$ connecting two terminals $s_1$ and $s_2$ from $P_j \cup A_j$, we let $X_k$ be the resulting connected component in $F$ that contains $s_k$ for $k=1,2$. As a result of building edges $p-F$, $X_3 = X_1 \cup X_2 \cup \{p-F\}$ will become a connected component in $F$. We will merge unused potential remaining in $\mathrm{Account}(X_1)$  and $\mathrm{Account}(X_2)$ into $\mathrm{Account}(X_3)$ and remove potential from  $\mathrm{Account}(X_3)$ to pay for the cost of building edges in $p-F$. A key part of our analysis is to bound the cost of $p - F$ against the dual growth of terminals in $X_1$ and $X_2$. This will then let us show that the account of the ``smaller'' component can pay for $p - F$.

We will need the following helper lemmas to prove our desired statements about the accounts. Define $\mathrm{Growth}(X,j)$ to be the maximum total dual growth of a terminal in $X$ in level $j$; so $$\mathrm{Growth}(X, j) = \max_{s\in X} \{\sum_{S \subseteq V: s \in S} y_S^{j}\mbox{ and } s \in A_j \cup P_j \}.$$  Observe that $\mathrm{Growth}(X,j) \leq 2^j$ by the limit on dual growth on level $j$. For example, consider the instance given in Figure \ref{fig:alg-ex}. At the end of level 0, the components of $F$ are $\{s_1\}$, $\{s_2,s_3\}$ and $\{s_4\}$. We have $\mathrm{Growth}(\{s_2,s_3\},0) = 0$ (since neither $s_2$ nor $s_3$ is active in this level) and $\mathrm{Growth}(\{s_1\},0) = \mathrm{Growth}(\{t_1\},0) = 2^{0}$. 

We now work towards proving that whenever we buy a path $p$ connecting two components $X_1$ and $X_2$ during level $j$, the level $j$ dual growth of both components can pay for the set of new edges $p-F$, i.e. $\mathrm{Growth}(X_1, j) + \mathrm{Growth}(X_2, j) \geq \sum_{e \in p-F} c_e$.  Note that edges in $\bar F^j$ are added one by one in both the consolidate and dual growth loops. Define the \textbf{first moat} of level $j$ that contains $v$ to be the connected component of $\bar F^j$ containing $v$ at the first time that $v$ is connected to a terminal of $A_j \cup P_j$ in $\bar F^j$. The following crucial lemma allows us to charge the cost of buying paths in $\bar F^j$ to the dual growth of a single terminal.

\begin{lemma}
  \label{path-cost}
  Let $v \in V$ be a vertex and $S_v$ be the first moat of level $j$ that contains $v$. There exists a terminal $s \in A_j \cup P_j$ in $S_v$ with a $s$-$v$ path $p$ in $\bar F^j$ with cost
\[\sum_{e \in p-F} c_e \leq \sum_{S \subset S_v: s \in S} y^j_S \leq \mathrm{Growth}(X,j),\]
where $X$ is the component in $F$ containing $s$.
\end{lemma}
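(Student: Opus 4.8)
The plan is to induct on the order in which edges are added to $\bar F^j$ during level $j$ (across both the consolidate loop and the dual growth loop), tracking how the first moat $S_v$ is built up. At the moment $v$ first becomes connected in $\bar F^j$ to some terminal of $A_j \cup P_j$, the first moat $S_v$ is the connected component of $\bar F^j$ containing $v$ at that instant. I would argue by induction on the structure of $S_v$: consider the last edge $e = (a,b)$ added to $\bar F^j$ that caused $v$ and a terminal of $A_j \cup P_j$ to become connected, say with $v$ on the $a$-side component $S_a$ and the terminal on the $b$-side component $S_b$. There are two cases depending on whether $e$ was added during the consolidate loop or the dual growth loop.

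First I would handle the dual growth case. When $e$ went tight (condition (2)), both $S_a$ and $S_b$ were moats of $\bar F^j$ just before $e$ was added. The edge $e$ is tight, meaning $c_e = \sum_{S : e \in \delta(S)} y^j_S$; only sets $S$ that separate $a$ from $b$ contribute, and among moats these are exactly $S_a$ and $S_b$ (and their sub-moats along the growth history). The standard primal-dual moat accounting gives that $c_e$ is paid for by the dual growth of the terminals inside $S_a$ plus that inside $S_b$: more precisely, $c_e = \left(\sum_{S \subseteq S_a} y^j_S \text{ contributing to } \delta(S_a)\right) + \left(\sum_{S \subseteq S_b} y^j_S \text{ contributing to } \delta(S_b)\right)$. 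Now I need a terminal in $S_v = S_a \cup S_b \cup \{e\}$: if $S_a$ already contained a terminal $s' \in A_j \cup P_j$, then by induction there is an $s'$-to-$a$ path in $\bar F^j$ of cost at most $\sum_{S \subset S_a : s' \in S} y^j_S$, and I would extend this by $e$ and the $b$-to-(terminal) path to get the bound; but I should be careful that the statement only asks for \emph{one} path of cost at most the growth of \emph{one} terminal, so actually the cleaner move is: take $s$ to be the terminal of $A_j \cup P_j$ on whichever side has the larger dual-growth contribution, reach $a$ or $b$ within that side by induction using at most the side's dual contribution, and note $e$ itself is paid by the \emph{other} side's contribution plus $e$'s own tightness slack — this requires tracking which dual variables are still being charged. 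The key point is that nested moats containing $s$ have total raised value $\sum_{S \subset S_v : s \in S} y^j_S$, and the edges strictly inside $S_v$ that lie on the $s$-$v$ path are each paid for by the innermost moat crossing them, so the telescoping gives exactly $\sum_{S \subset S_v : s \in S} y^j_S$; then $e$ is the boundary edge of the two sub-moats whose union (plus $e$) is $S_v$, hence $e \notin \delta(S)$ for any $S \subsetneq S_v$ on the path, so $e$'s cost is \emph{not} double-counted and is absorbed by the slack — wait, $e$ is tight, so I instead route through $e$ only if I'm combining both sides' growth, which is allowed since the final inequality is $\leq \mathrm{Growth}(X,j)$ only after the $\sum_{S \subset S_v}$ bound. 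I would need to state this carefully: the first inequality $\sum_{e \in p - F} c_e \leq \sum_{S \subset S_v : s \in S} y^j_S$ is what the induction delivers, and the second inequality is immediate since $\sum_{S \subset S_v : s \in S} y^j_S \leq \sum_{S : s \in S} y^j_S = \mathrm{Growth}(X,j)$ by Lemma \ref{moat} ($s$ lies in the component $X$, and its total dual growth at level $j$ is the defining max for $\mathrm{Growth}$, or at least a lower bound on it).

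For the consolidate case, $e$ is an edge of $F$ being copied into $\bar F^j$; its cost is already paid by $F$, so $e \in p \cap F$ contributes nothing to $\sum_{e \in p - F} c_e$, and the inductive bound on the two sub-moats passes through unchanged (in fact the consolidate loop can only shrink the $p - F$ cost). Additionally, whenever the consolidate loop connects $s_1 \in A_j$ to $s_2 \in P_j$ within a moat and builds a path, those built edges are exactly handled by the same dual-growth accounting, but since they immediately enter $F$ they again drop out of future $p - F$ terms. The main obstacle I anticipate is getting the bookkeeping exactly right on \emph{which} dual variables pay for \emph{which} edges so that no dual value is charged twice — i.e., setting up the induction hypothesis so that the $s$-$v$ path uses only duals $y^j_S$ with $s \in S \subsetneq S_v$ and each such $S$ pays for at most the edges of the path it cuts. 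This is the classical primal-dual "moat growth pays for the tree inside it" argument, but here complicated by the interleaving of $F$-edges (free) and newly-tight edges, and by the fact that a moat at level $j$ may span several components of $F$; Lemma \ref{moat} is what rescues the final step, guaranteeing the chosen terminal $s$ sits in a single well-defined component $X$ so that $\mathrm{Growth}(X,j)$ dominates $\sum_{S : s \in S} y^j_S$.
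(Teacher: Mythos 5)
Your high-level strategy (induction on the order in which edges enter $\bar F^j$, splitting on whether the connecting edge came from the consolidate loop or went tight) is the same as the paper's, and your handling of the consolidate case and of the final inequality $\sum_{S \subset S_v : s \in S} y^j_S \leq \mathrm{Growth}(X,j)$ is fine. But there is a genuine gap in the dual-growth case, and it is precisely the point where your write-up starts hedging (``wait, $e$ is tight, so I instead route through $e$ only if I'm combining both sides' growth''). You set up a \emph{symmetric} decomposition at the connecting edge $e=(a,b)$, with $v$ in a component $S_a$ and the terminal in a component $S_b$, both of which may carry dual. Under that picture the tight edge's cost splits as $c_e = \sum_{S: a\in S, e\in\delta(S)} y^j_S + \sum_{S: b\in S, e\in\delta(S)} y^j_S$, and charging the whole path to a single terminal is impossible: that is exactly why the \emph{next} lemma (Lemma \ref{merge-cost}) needs $\mathrm{Growth}(X_1,j)+\mathrm{Growth}(X_2,j)$. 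Your fallback of ``pick the terminal on whichever side has the larger contribution'' does not work either, because the path must terminate at $v$, which sits on a fixed side; and a tight edge has no ``slack'' to absorb the other side's share.

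The missing idea is an asymmetry that the paper exploits: since $v$ was, by definition of the first moat, \emph{not} connected in $\bar F^j$ to any terminal of $A_j\cup P_j$ before $\bar e$ was added, and level-$j$ duals are only ever grown on active moats (which contain terminals of $A_j$), the $v$-side of $\bar e$ has accumulated \emph{no} dual at level $j$. The paper phrases this by observing that $v$ is itself an endpoint of $\bar e$ and that $v$ was never in an active moat, so tightness gives
\[
c_{\bar e} \;=\; \sum_{S:\,(u,v)\in\delta(S)} y^j_S \;=\; \sum_{S\subseteq S_1:\, u\in S} y^j_S \;=\; \sum_{S_u\subseteq S\subseteq S_1} y^j_S,
\]
where $u$ is the other endpoint, $S_1$ is the moat containing $u$ just before $\bar e$ is added, and $S_u$ is the first moat containing $u$. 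The whole cost of $\bar e$ is thus charged to the nested moats between $S_u$ and $S_1$ on the terminal side, and these are disjoint from the moats $S\subset S_u$ that the inductive hypothesis already charged for the $s$--$u$ subpath; the two sums telescope to $\sum_{S\subseteq S_1: s\in S} y^j_S \leq \sum_{S\subset S_v: s\in S} y^j_S$ with no double counting. Without isolating this one-sidedness, your induction cannot deliver the single-terminal bound that the lemma asserts.
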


\begin{proof}
  We prove this by induction on $\bar F^j$, as edges are added to $\bar F^j$. The statement clearly holds in the beginning, when $\bar F^j = \emptyset$. Now, we turn to the inductive case. Suppose $v$ was first connected to a terminal of $A_j \cup P_j$ when the edge $\bar e$ was added to $\bar F^j$, and let $t$ be that terminal. Suppose $S_1$ was the moat containing $t$ before $\bar e$ was added. Since $v$ was not connected to any terminal of $A_j \cup P_j$ before this time, we have that $v$ is an endpoint of $\bar e$, and the other endpoint of $\bar e$, say $u$, is contained in $S_1$. In particular, the moat $S_v = S_1 \cup \{v\}$ is the first moat containing $v$. 

Since $u$ was connected to a terminal of $A_j \cup P_j$ (in particular, the terminal $t$) in $\bar F^j$ at an earlier time, the inductive hypothesis implies that there exists a terminal $s \in A_j \cup P_j$ and a $s$-$u$ path $q$ in $\bar F^j$ with cost 
\[\sum_{e \in q-F} c_e \leq \sum_{S \subset S_u : s \in S}y^j_S,\]
where $S_u \subseteq S_1$ is the first moat containing $u$, and it also contains $s$. See Figure \ref{moat-path} for an illustration.

\begin{figure}[t]
	\centering
	\includegraphics[width=0.3\textwidth]{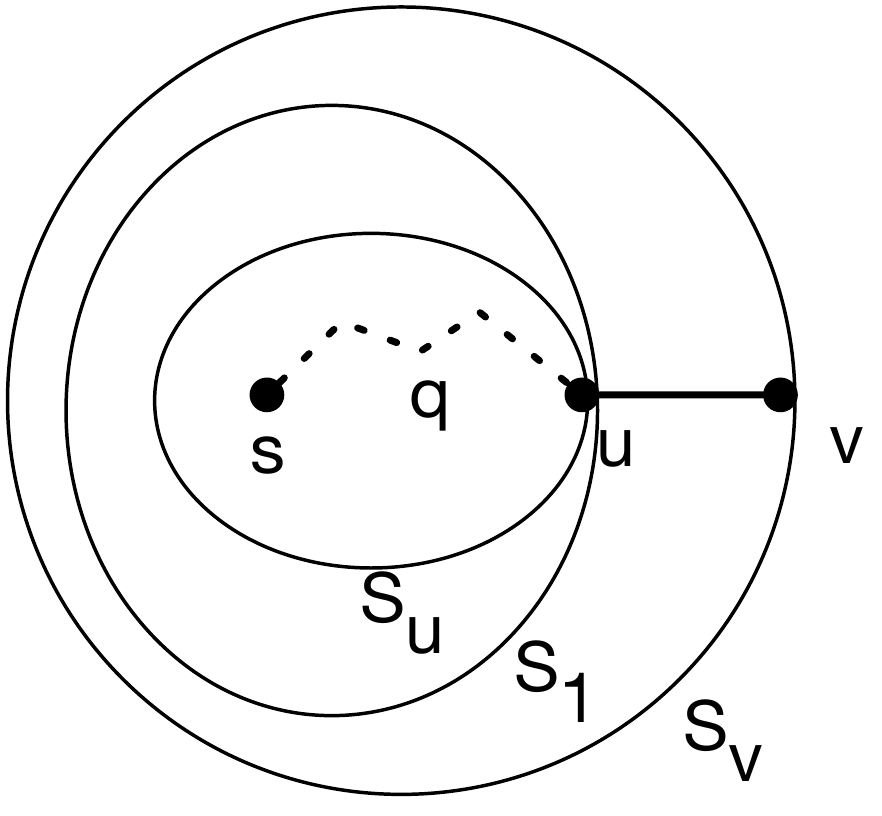}\\
	\caption{An illustration of the proof of Lemma \ref{path-cost}.}\label{moat-path}
\end{figure}

Now consider the $s$-$v$ path $p = q \cup \{\bar e\}$. The edge $\bar e$ was either an edge of $F$ added to $\bar F^j$ during the consolidate loop or it was an edge that went tight. In the first case, we are done. In the second case, we have 
\[c_{\bar e} = \sum_{S \subseteq V: (u,v) \in \delta(S)}y^j_S = \sum_{S \subseteq S_1 : u \in S}y^j_S = \sum_{S_u \subseteq S \subseteq S_1}y^j_S,\]
where the first equality follows from tightness of $\bar e$, the second from the fact that $v$ was never contained in an active moat before now, and the third from the fact that $S_u$ was the first moat containing $u$. Therefore, the cost of $p$ is
\begin{align*}
  \sum_{e \in p-F} c_e 
  &= \sum_{e \in q - F} c_e + c_{\bar e} \\
  &\leq \sum_{S \subset S_u : s \in S} y^j_S + \sum_{S_u \subseteq S \subseteq S_1} y^j_S \\
  &\leq \sum_{S \subseteq S_1 : s \in S} y^j_S\\
  &\leq \sum_{S \subset S_v : s \in S}y^j_S,
\end{align*}
where the second inequality follows from the fact that $S_u$ contains $s$ and the last inequality follows from the fact that $S_1$ is a strict subset of $S_v$. 
\end{proof}

\begin{lemma}
  \label{merge-cost}
  Suppose there are terminals $s_1, s_2 \in A_j \cup P_j$ in different components of $F$ ($X_1$ and $X_2$ respectively) such that there is a path between $s_1$ and $s_2$ in $\bar F^j$. Let $p$ be a path in $\bar F^j$ that minimizes the cost $\sum_{e \in p - F} c_e$. Then $\sum_{e \in p - F}c_e \leq \mathrm{Growth}(X_1,j) + \mathrm{Growth}(X_2,j)$.
\end{lemma}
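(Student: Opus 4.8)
The plan is to produce some $s_1$-$s_2$ path in $\bar F^j$ whose edges outside $F$ can be paid for by the level-$j$ dual growth of $X_1$ and $X_2$, and then invoke the minimality of $p$. The main tool is Lemma~\ref{path-cost}, applied from both sides of the edge that first merged $s_1$ and $s_2$. Concretely, since $s_1$ and $s_2$ are connected in $\bar F^j$, I would let $\bar e=(u,w)$ be the edge whose addition to $\bar F^j$ first placed them in a common moat; just before $\bar e$ was added it joined two distinct moats, one ($T_u$) containing $u$ and, without loss of generality, $s_1$, the other ($T_w$) containing $w$ and $s_2$. (For a path built in the consolidate loop I would first note that we may add all of $F$ to $\bar F^j$ before building any path, which changes neither the set of paths built nor the moat structure; thus $F\subseteq\bar F^j$ whenever $p$ is built.) Since $s_1\in T_u$ and $s_2\in T_w$ lie in $A_j\cup P_j$, the first moats $S_u\subseteq T_u$ and $S_w\subseteq T_w$ of level $j$ containing $u$, $w$ are well defined.

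Next I would apply Lemma~\ref{path-cost} to $u$ and to $w$, obtaining terminals $\hat s_1\in A_j\cup P_j$ in $S_u$ and $\hat s_2\in A_j\cup P_j$ in $S_w$, together with an $\hat s_1$-$u$ path $q_1$ and an $\hat s_2$-$w$ path $q_2$ in $\bar F^j$ with $\sum_{e\in q_1-F}c_e\le\sum_{S\subset S_u:\hat s_1\in S}y_S^j$ and $\sum_{e\in q_2-F}c_e\le\sum_{S\subset S_w:\hat s_2\in S}y_S^j$. Since $\hat s_1$ and $s_1$ lie in the moat $T_u$, both belong to $A_j\cup P_j$, and are connected in $\bar F^j$, Lemma~\ref{unique} places them in the same component of $F$, so $\hat s_1\in X_1$; symmetrically $\hat s_2\in X_2$. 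I would then route an $s_1$-$s_2$ path $p'\subseteq\bar F^j$ from $s_1$ to $\hat s_1$ inside $X_1$, along $q_1$ to $u$, across $\bar e$ to $w$, back along $q_2$ to $\hat s_2$, and from $\hat s_2$ to $s_2$ inside $X_2$; the pieces inside $X_1$ and $X_2$ use only $F$-edges. Hence $\sum_{e\in p-F}c_e\le\sum_{e\in p'-F}c_e\le\sum_{e\in q_1-F}c_e+\sum_{e\in q_2-F}c_e+c_{\bar e}$, where the last term is present only when $\bar e\notin F$, and it remains to bound the right-hand side by $\mathrm{Growth}(X_1,j)+\mathrm{Growth}(X_2,j)$.

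For the term $c_{\bar e}$ (the case $\bar e\in F$ being trivial), I would use that $\bar e$ went tight in the dual growth loop, so $c_{\bar e}=\sum_{S:\bar e\in\delta(S)}y_S^j$. Each $S$ with $y_S^j>0$ was a moat when its dual was raised, and just before $\bar e$ was added no moat contained both $u$ and $w$; so each such $S$ contains exactly one of $u,w$, every moat containing $u$ at that stage is contained in $T_u$, and every moat containing $w$ in $T_w$; thus $c_{\bar e}=\sum_{S\subseteq T_u:u\in S}y_S^j+\sum_{S\subseteq T_w:w\in S}y_S^j$. Moreover a moat $S\subset S_u$ containing $u$ carries no dual: if $y_S^j>0$ then $S$ was active, hence $g_i(S)=1$, hence by properness (so $g_i(S)\le\max_{v\in S}g_i(\{v\})$) together with Lemma~\ref{inactive} the moat $S$ contains a terminal that was active at level $j$, contradicting that $S_u$ is the first moat containing $u$ with a terminal of $A_j\cup P_j$. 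So $\sum_{S\subseteq T_u:u\in S}y_S^j$ equals the total dual of the moats $S$ with $S_u\subseteq S\subseteq T_u$, all of which contain $\hat s_1$; combining this with the bound on $q_1$ gives $\sum_{e\in q_1-F}c_e+\sum_{S\subseteq T_u:u\in S}y_S^j\le\sum_{S\subseteq T_u:\hat s_1\in S}y_S^j\le\sum_{S:\hat s_1\in S}y_S^j\le\mathrm{Growth}(X_1,j)$, the last step because $\hat s_1\in X_1$ and $\hat s_1\in A_j\cup P_j$. The symmetric inequality holds for the $w$-side and $X_2$, and summing the two finishes the proof.

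I expect the moat bookkeeping in the last step to be the main obstacle: one has to work with the laminar family of level-$j$ moats over all time steps, split $c_{\bar e}$ cleanly into a ``$u$-part'' and a ``$w$-part'', and rule out any dual ever sitting on a moat strictly inside $S_u$ (or $S_w$) that contains $u$ (resp.\ $w$), using that an active moat always contains an active terminal. A secondary point to get right is that the argument must cover paths built in the consolidate loop as well as in the dual growth loop (where Lemma~\ref{unique} and $F\subseteq\bar F^j$ are immediate).
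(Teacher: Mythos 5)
Your proof is correct and follows essentially the same route as the paper's: identify the edge $\bar e$ that first merges the moats of $s_1$ and $s_2$, apply Lemma~\ref{path-cost} to both of its endpoints, split $c_{\bar e}$ into the dual contributions of the two sides, and absorb each half into $\mathrm{Growth}(X_k,j)$ via the fact that no dual sits on a moat containing $u$ strictly inside its first moat. The only differences are cosmetic (you build an explicit candidate path and invoke minimality of $p$, and fold the consolidate-loop cases into the same framework, where the paper treats them by separate case analysis).
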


\begin{proof}
  To prove the lemma, we will show that there exists a path $p$ with cost $\sum_{e \in p - F}c_e \leq \mathrm{Growth}(X_1,j) + \mathrm{Growth}(X_2,j)$. We will consider the consolidate loop and the dual growth loops separately. In the consolidate loop, there are two cases: (1) either $s_1$ and $s_2$ were connected in $\bar F^j$ even before any edge of $F \setminus \bar F^j$ was added to $\bar F^j$; (2) or $s_1$ and $s_2$ were only connected in $\bar F^j$ after some edge of $F \setminus \bar F^j$ was added to $\bar F^j$. Suppose $s_1 \in A_j$ and $s_2 \in P_j$. Case 1 can occur if $s_1$ was already contained in the moat containing $s_2$ in some previous time step (when $s_1$ was not yet a terminal). This case is easy: Lemma \ref{path-cost} implies that there exists a path $p'$ in $\bar F^j$ connecting $X_2$ and $s_1$ with cost $\sum_{e \in p' - F} c_e \leq \mathrm{Growth}(X_2,j)$. Next, we consider Case 2. Let $\bar e = (u,v)$ be the edge of $F$ added to $\bar F^j$ that caused $s_1$ and $s_2$ to connect in $\bar F^j$. Suppose that before $\bar e$ was added, $s_1$ was connected to $u$, and $s_2$ was connected to $v$ in $\bar F^j$. Applying Lemma \ref{path-cost} to the components $X_1$ and $X_2$, we get that there is a path in $p_1$ in $\bar F^j$ and a path $p_2$ in $\bar F^j$ with cost $\sum_{e \in p_1 - F} c_e \leq \mathrm{Growth}(X_1,j)$ and $\sum_{e \in p_2 - F} c_e \leq \mathrm{Growth}(X_2,j)$. Since $\bar e \in F$, the path $p_1$ followed by the edge $e$ followed by the path $p_2$ is a path in $\bar F^j$ with cost at most $\mathrm{Growth}(X_1, j) + \mathrm{Growth}(X_2,j)$.

  Finally, we consider the dual growth loop. This case is similar to the second case of the consolidate loop, but we also need to show that the dual growth can also pay for the edge $\bar e$. Suppose $S_1$ and $S_2$ are the moats containing $s_1$ and $s_2$ before $\bar e$ was added, and $u \in S_1$ and $v \in S_2$. (See Figure \ref{colliding-moats} for an illustration.) Since $\bar e$ is a tight edge, we have
\[c_{\bar e} = \sum_{S \subseteq V: (u,v) \in \delta(S)}y^j_S = \sum_{S \subseteq S_1: u \in S}y^j_S + \sum_{S \subseteq S_2 : v \in S}y^j_S.\]
Let $p$ be the path in $\bar F^j$ that minimizes the cost $\sum_{e \in p-F} c_e$. Since $s_1$ and $s_2$ were only connected after $\bar e$ was added, the path $p$ contains the edge $\bar e$. Let $p_1$ be the subpath of $p$ from $s_1$ to $u$ and $p_2$ be the subpath of $p$ from $v$ to $s_2$. The cost of $p$ is
\[\sum_{e \in p_1 - F} c_e + c_{\bar e} + \sum_{e \in p_2-F}c_e = \left(\sum_{e \in p_1 - F} c_e + \sum_{S \subseteq S_1 : u \in S} y^j_S\right) + \left(\sum_{S \subseteq S_2 : v \in S}y^j_S + \sum_{e \in p_2-F}c_e\right).\]

\begin{figure}[t]
	\centering
	\includegraphics[width=0.4\textwidth]{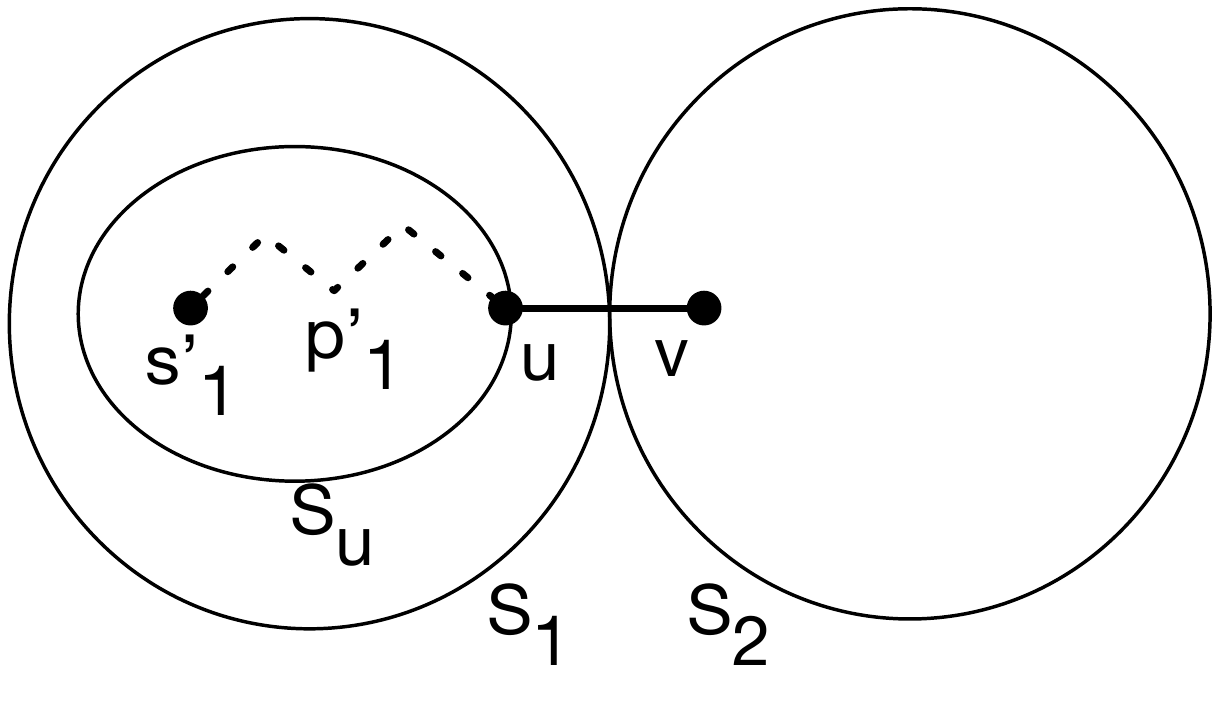}\\
	\caption{An illustration of the proof of Lemma \ref{merge-cost}. The total dual growth around $s'_1$ can pay for the cost of $p'_1$ as well as the portion of $(u,v)$ ``contained'' in $S_1$.}\label{colliding-moats}
\end{figure}

We claim that 
\[\sum_{e \in p_1 - F} c_e + \sum_{S \subseteq S_1 : u \in S} y^j_S \leq \mathrm{Growth}(X_1,j).\] Let $S_u \subseteq S_1$ be the first moat containing $u$. Lemma \ref{path-cost} implies that there exists a terminal $s'_1 \in S_u$ and a $s'_1$-$u$ path $p'_1$ with cost $\sum_{e \in p'_1 - F} c_e \leq \sum_{S \subset S_u : s'_1 \in S}y^j_S$. By Lemma \ref{unique}, $s'_1$ and $s_1$ are already connected in $F$, and so the cost of $p_1$ is at most the cost of $p'_1$. 
Thus, we have
\begin{align*}
  \sum_{e \in p_1 - F} c_e + \sum_{S \subseteq S_1 : u \in S} y^j_S 
  &\leq \sum_{S \subset S_u : s'_1 \in S}y^j_S + \sum_{S \subseteq S_1 : u \in S} y^j_S \\
  &\leq \sum_{S \subset S_u : s'_1 \in S}y^j_S + \sum_{S_u \subseteq S \subseteq S_1} y^j_S \\
  &\leq \sum_{S \subseteq S_1 : s'_1 \in S} y^j_S\\
  &\leq \mathrm{Growth}(X_1,j),
\end{align*}
where the second inequality follows from the fact that $S_u$ is the first moat containing $u$, the third from the fact that $S_u$ contains $s'_1$, and the final from the definition of $\mathrm{Growth}$. This proves the claim. A similar argument also shows that \[\sum_{e \in p_2 - F} c_e + \sum_{S \subseteq S_2 : v \in S} y^j_S \leq \mathrm{Growth}(X_2,j).\] This completes the proof of the lemma.
\end{proof}

At any point in the algorithm, for each connected component $X$ of $F$, define the {\bf class} of $X$ to be the highest level $j$ such that it contains a terminal currently or previously active at level $j$; that is, the largest $j$ such that $X \cap (P_j \cup A_j) \neq \emptyset$.  We denote the class of $X$ as $\mathrm{Class}(X)$ and sometimes refer to it as the {\bf top level} of $X$.  Define $\mathrm{TopGrowth}(X)$ to be the maximum total dual growth of a terminal in $X$ in level $\mathrm{Class}(X)$, i.e.
\begin{align*}
\mathrm{TopGrowth}(X) & = \mathrm{Growth}(X,\mathrm{Class}(X))\\
&  = \max_{s\in X} \{\sum_{S \subseteq V: s \in S} y_S^{\mathrm{Class}(X)}\mbox{ and } s \mbox{ is a terminal} \}.
\end{align*}
For example, consider again the instance in Figure \ref{fig:alg-ex}. At the end of level $0$, we have $\mathrm{Class}(\{s_2,s_3\}) = -1$ and  and $\mathrm{TopGrowth}(\{s_2,s_3\}) = 2^{-1}$; we also have $\mathrm{Class}(\{s_1\}) = \mathrm{Class}(\{s_4\}) = 0$ and $\mathrm{TopGrowth}(\{s_1\}) = \mathrm{TopGrowth}(\{s_4\}) = 2^{0}$.

We know that $\mathrm{TopGrowth}(X) \leq 2^{\mathrm{Class}(X)}$ by the dual limit on level $\mathrm{Class}(X)$.  We now show the following, which is the technical heart of our result.

\begin{lemma} \label{cost} At any time in the execution of the algorithm, the following two invariants hold:
\begin{enumerate}
\item Every connected component $X$ of $F$ has
$$\mathrm{Account}(X) \geq 2^{\mathrm{Class}(X)} + \mathrm{TopGrowth}(X);$$
\item $\sum_{e\in F}c_e  +  \sum_{X \in F} \mathrm{Account}(X) = \sum_j\sum_S y_S^{j}$.
\end{enumerate}
\end{lemma}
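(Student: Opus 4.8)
The plan is to prove both invariants together by induction over the atomic operations performed by the algorithm and its shadow: (i) the arrival of a new function $f_i$, which may create new terminals; (ii) a level transition, in which the algorithm finishes the loops at level $j$ and begins level $j+1$; (iii) an infinitesimal uniform increase, by some amount $\epsilon$, of all active level-$j$ dual variables inside the dual growth loop; and (iv) the addition of a path $p$ to $F$ (in the consolidate loop or in Step (3) of the dual growth loop at level $j$), which merges the component $X_1\ni s_1$, the component $X_2\ni s_2$, and any components that $p$ passes through, into one new component $X_3$. The base case is the initial configuration ($F=\emptyset$, all $y_S^j=0$, all accounts $0$, every singleton of class $-\infty$ with $2^{-\infty}$ read as $0$), where both invariants hold trivially. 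Two structural facts should be recorded first. Since the algorithm processes levels in increasing order within a time step, $\mathrm{Class}(X)=\ell$ forces $X\cap(A_{\ell'}\cup P_{\ell'})=\emptyset$, and hence $\mathrm{Growth}(X,\ell')=0$, for every $\ell'>\ell$. And since the level-$j$ moats are unions of connected components of $F$ (because of the consolidate loop), every connected component of $F$ lies inside exactly one level-$j$ moat, so the shadow algorithm's crediting is well defined and credits each component for at most one active moat.

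Invariant (2) is pure bookkeeping. In an operation of type (iii), the left-hand side $\sum_{e\in F}c_e+\sum_{X}\mathrm{Account}(X)$ and the right-hand side $\sum_j\sum_S y_S^j$ both increase by $\epsilon\,|\mathcal M|$, because the shadow algorithm credits exactly $\epsilon$ to the unique component attached to each active moat (Lemma \ref{moat}). In an operation of type (iv), $\sum_{e\in F}c_e$ grows by $\sum_{e\in p-F}c_e$, exactly the amount removed from $\mathrm{Account}(X_3)$, while merging accounts preserves their total and the dual side is untouched. Operations of types (i) and (ii) change nothing on either side. Hence (2) holds throughout.

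For invariant (1), only operations of types (ii), (iii), (iv) can decrease the slack $\mathrm{Account}(X)-2^{\mathrm{Class}(X)}-\mathrm{TopGrowth}(X)$. In type (iii), the component $X$ attached to an active moat has its account increased by exactly $\epsilon$, whereas $2^{\mathrm{Class}(X)}+\mathrm{TopGrowth}(X)$ increases by at most $\epsilon$: it is unchanged when $\mathrm{Class}(X)\neq j$ (the top-level growth does not involve level-$j$ duals then), and it rises by at most $\epsilon$ when $\mathrm{Class}(X)=j$; a component not attached to an active moat has none of its terminals in an active moat, so its top-level growth is unchanged. In type (ii), the only nontrivial case is $\mathrm{Class}(X)$ rising from $j$ to $j+1$, which happens exactly when a terminal $s\in X$ is still active at the start of level $j+1$; since the dual growth loop at level $j$ exits only when no terminal is active at that level, such an $s$ must have reached its level-$j$ limit, so $\mathrm{Growth}(X,j)=2^j=\mathrm{TopGrowth}(X)$ just before the transition, and the inductive hypothesis gives $\mathrm{Account}(X)\ge 2^j+2^j=2^{j+1}$; after the transition the requirement is $2^{j+1}+\mathrm{Growth}(X,j+1)=2^{j+1}$ by the first structural fact, so (1) is restored.

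The main obstacle is the path-building case (iv). Write $c_k=\mathrm{Class}(X_k)\ge j$ (each $X_k$ contains $s_k\in A_j\cup P_j$). Relabelling if necessary, let $X_1$ be a merged component of maximum class $c:=\mathrm{Class}(X_3)$ achieving the largest $\mathrm{TopGrowth}$ among such components, so that $\mathrm{TopGrowth}(X_3)=\mathrm{TopGrowth}(X_1)$ (components of class below $c$ contribute $0$ to level-$c$ growth, by the structural fact). Lemma \ref{merge-cost} gives $\sum_{e\in p-F}c_e\le\mathrm{Growth}(X_1,j)+\mathrm{Growth}(X_2,j)$, and I would check the key inequality $\mathrm{Growth}(X_1,j)+\mathrm{Growth}(X_2,j)\le 2^{c_2}+\mathrm{TopGrowth}(X_2)$: if $c_2=j$ then the right side equals $2^j+\mathrm{Growth}(X_2,j)$, which dominates the left since $\mathrm{Growth}(X_1,j)\le 2^j$; if $c_2>j$ then $2^{c_2}\ge 2^{j+1}$ already dominates the left. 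Combining the inductive hypotheses on $X_1$ and $X_2$ with nonnegativity of the other merged accounts, the new account satisfies
\[ \mathrm{Account}(X_3)\;\ge\;\bigl(2^{c}+\mathrm{TopGrowth}(X_1)\bigr)+\bigl(2^{c_2}+\mathrm{TopGrowth}(X_2)\bigr)-\bigl(2^{c_2}+\mathrm{TopGrowth}(X_2)\bigr)\;=\;2^{c}+\mathrm{TopGrowth}(X_3), \]
which is exactly invariant (1) for $X_3$. The delicate points throughout are keeping $\mathrm{Class}$ and $\mathrm{TopGrowth}$ under control across merges and across time steps (where a level may already carry dual value from earlier), and ensuring the crediting is always well defined; these are handled by the two structural facts together with Lemmas \ref{unique}, \ref{inactive}, \ref{active}, \ref{path-cost}, and \ref{merge-cost}.
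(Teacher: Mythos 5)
Your proof is correct and follows essentially the same route as the paper's: invariant 2 by bookkeeping, invariant 1 by induction, with the merge step resolved by bounding the path cost via Lemma \ref{merge-cost} and the same two-case analysis (class of the ``smaller'' endpoint component equal to $j$ versus greater than $j$) showing its stored potential $2^{\mathrm{Class}}+\mathrm{TopGrowth}$ pays for $p-F$. The only differences are cosmetic: your labels for the smaller/larger components are swapped relative to the paper's, and you spell out the non-merge operations (uniform dual growth and level transitions) that the paper dispatches more briefly.
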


Invariant 1 ensures that for a component $X$, $\mathrm{Account}(X)$ stores at least $2^j$ total potential for each level $j< \mathrm{Class}(X)$ plus the maximum total dual growth of a terminal in $X$ at the top level, which gives total potential at least $2^{\mathrm{Class}(X)-1}+2^{\mathrm{Class}(X)-2}+... = 2^{\mathrm{Class}(X)}$ plus $\mathrm{TopGrowth}(X)$.

\begin{proof} Since accounts get credited for dual growth and are debited exactly the cost of edges in $F$, invariant 2 holds at any point in the execution of the algorithm.

We now prove the first invariant by induction on the algorithm.  It is easy to see that this invariant holds when no edges have been added to $F$ since the algorithm grows dual variables in level $j$ until some active dual variable reaches its limit $2^j$; it then grows duals in next higher level.
Thus, $\mathrm{Account}(X)$ is credited $2^j$ for each level $j$ below the top level $\mathrm{Class}(X)$ while getting $\mathrm{TopGrowth}(X)$ for the top level. 

We now turn to the inductive proof of invariant 1. Suppose the invariant holds just before we add some path $p$ to $F$ at level $j$ that minimizes $\sum_{e \in p-F}c_e$. Suppose the path connects terminals $s_1, s_2 \in A_j \cup P_j$, in components $X_1$ and $X_2$, respectively, of $F$. Let $X_3$ be the component that results from adding path $p$ to $F$. Define $j_1 = \mathrm{Class}(X_1)$ and $j_2 = \mathrm{Class}(X_2)$. Our shadow algorithm merges the unused potential remaining in $\mathrm{Account}(X_1)$  and $\mathrm{Account}(X_2)$ into $\mathrm{Account}(X_3)$, and removes potential from  $\mathrm{Account}(X_3)$ to pay for the cost of building edges $p-F$. Thus, we have
\begin{align*}
  \mathrm{Account}(X_3) 
  &= \mathrm{Account}(X_1) + \mathrm{Account}(X_2) - \sum_{e \in p - F}c_e \\
  &\geq 2^{j_1} + \mathrm{TopGrowth}(X_1) + 2^{j_2} + \mathrm{TopGrowth}(X_2) - \sum_{e \in p - F}c_e,
\end{align*}
where the inequality follows from applying the inductive hypothesis to $X_1$ and $X_2$.

We suppose without loss of generality that $j_2 \geq j_1$. It is easy to see that $\mathrm{Class}(X_3) = j_2$ and $\mathrm{TopGrowth}(X_3) = \mathrm{TopGrowth}(X_2)$. Thus, to prove invariant 1, it suffices to show that 
\[2^{j_1} + \mathrm{TopGrowth}(X_1) \geq \sum_{e \in p - F}c_e.\]
Note that this inequality is a formalization of our intuition that the potential associated with the component with the smaller account is sufficient to pay for adding the path. By Lemma \ref{path-cost}, we have \[\mathrm{Growth}(X_1,j) + \mathrm{Growth}(X_2,j) \geq \sum_{e \in p-F}c_e.\]
Moreover, the limit on the dual growth at level $j$ implies that $\mathrm{Growth}(X_1,j) \leq 2^j$ and $\mathrm{Growth}(X_2,j) \leq 2^j$. There are two cases to consider: either $j_1 > j$ or $j_1 = j$. In the first case, we get $2^{j_1} \geq 2^{j+1} \geq \mathrm{Growth}(X_1,j) + \mathrm{Growth}(X_2,j)$. On the other hand, if $j_1 = j$, then $\mathrm{TopGrowth}(X_1) = \mathrm{Growth}(X_1, j)$ and $2^{j_1} \geq \mathrm{Growth}(X_2,j)$. In both cases, we have
\[2^{j_1} + \mathrm{TopGrowth}(X_1) \geq \mathrm{Growth}(X_1,j) + \mathrm{Growth}(X_2,j) \geq \sum_{e \in p - F}c_e,\]
as desired.

Therefore,  invariant 1 holds at any time during the execution of the algorithm. 
 \end{proof}

To finish the proof, we need a statement about the total value of the dual solution over all dual variables.  This proof is similar to one in Berman and Coulston \cite{BermanC97} (page 347) about collections of balls.

\begin{lemma} \label{dualvars}
Let the dual vector $y^j$ with the maximum total dual $\sum_S y_S^{j}$ be $y^{\max}$. At the end of time step $i$, we have $\sum_j \sum_S y_S^j   \leq 2(\log |R_i| + 3) \sum_S y_S^{\max}$.
\end{lemma}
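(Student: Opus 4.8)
The plan is to mimic the argument of Berman and Coulston~\cite{BermanC97} (page~347) for collections of balls, with the level‑$j$ dual limit $2^j$ playing the role of the ball‑radius bound. Write $D_j=\sum_S y^j_S$ and let $D_{\max}=\sum_S y^{\max}_S=\max_j D_j$. If all duals are zero the statement is trivial, so assume otherwise; then by Theorem~\ref{termination} no dual variable is ever grown at a level above $\lceil\log_2(\max_{u,v}d(u,v))\rceil$, so there is a well‑defined largest level $L$ with $D_L>0$, and in particular $|R_i|\ge 1$. I would then establish matching upper and lower bounds on $D_j$ and split the sum over levels into a geometric tail plus a window of $O(\log|R_i|)$ levels.

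First I would show the upper bound $D_j\le |R_i|\,2^j$ for every level $j$. If $y^j_S>0$ then $S$ is, or once was, an active moat, so $g_{i'}(S)=1$ for the time step $i'$ in which that growth occurred, and hence $g_i(S)=1$ since the $g$'s only increase; since $g_i$ is proper, iterating subadditivity gives $1=g_i(S)\le\max_{v\in S}g_i(\{v\})$, so $S$ contains a terminal, and I pick one representative $r(S)\in S\cap R_i$. Grouping the moats with positive dual by their representative and using the level‑$j$ limit $\sum_{S\ni s}y^j_S\le 2^j$ for each terminal $s$ gives $D_j=\sum_{s\in R_i}\sum_{S:\,r(S)=s}y^j_S\le\sum_{s\in R_i}\sum_{S\ni s}y^j_S\le|R_i|\,2^j$.

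Next I would prove the complementary lower bound $D_j\ge 2^j$ for every level $j<L$ (only $j=L-1$ is needed below). Since $D_L>0$, some dual variable was grown at level $L$, which happens only inside the dual‑growth loop, hence only if there is a terminal $s$ active at the start of level $L$ in some time step $i'\le i$. Within a time step the set of active terminals only shrinks, so $s$ is active at the start of every level $j\le L$ in time step $i'$; and since $s$ does not become inactive before level $L$, while the dual‑growth loop at level $j$ only terminates once every terminal of $A_j$ is either inactive or has reached its level‑$j$ limit, the terminal $s$ must reach that limit, $\sum_{S\ni s}y^j_S=2^j$, at each level $j<L$ during time step $i'$. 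Dual variables never decrease, so $D_j\ge\sum_{S\ni s}y^j_S= 2^j$ holds at the end of time step $i$ too. Finally I would split $\sum_j D_j$ into three ranges: levels $j>L$ contribute $0$; the window $L-\lceil\log_2|R_i|\rceil\le j\le L$ has at most $\log_2|R_i|+2$ levels, each contributing at most $D_{\max}$; and the tail $j\le L-\lceil\log_2|R_i|\rceil-1$ contributes, by the upper bound, at most $|R_i|\sum_{j\le L-\lceil\log_2|R_i|\rceil-1}2^j=|R_i|\,2^{L-\lceil\log_2|R_i|\rceil}\le 2^L\le 2D_{L-1}\le 2D_{\max}$ using the lower bound. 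Summing yields $\sum_j D_j\le(\log_2|R_i|+4)D_{\max}\le 2(\log_2|R_i|+3)D_{\max}$.

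I expect the lower‑bound step to be the main obstacle: it requires careful use of the algorithm's bookkeeping — a terminal can only leave $A_j$ by reaching the level‑$j$ limit or by becoming inactive, the consolidate loop can only accelerate the latter, and "active" status is monotone (non‑increasing) within a time step and limit‑reaching is monotone across time — to be certain that a terminal surviving as active all the way up to level $L$ truly pins $D_j$ down to $2^j$ on every lower level. The only subtlety in the upper‑bound step is the appeal to properness to guarantee that each charged moat contains a genuine terminal of $R_i$, and the final step is just the geometric summation.
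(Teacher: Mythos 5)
Your proof is correct and follows essentially the same route as the paper's: the per-level upper bound $\sum_S y_S^j \le |R_i|\,2^j$ (each positive dual set contains a terminal by properness), a lower bound of $2^{L-1}$ at the level just below the top coming from a terminal that reached its limit there, and a split of $\sum_j D_j$ into a geometric tail dominated by that one level plus a window of $\lceil \log_2 |R_i|\rceil + O(1)$ levels each bounded by $D_{\max}$. One minor caveat: your side remark that the set of active terminals only shrinks within a time step is not literally true (adding a path can merge an inactive component into a violated one and reactivate its terminals), but this is not needed for your conclusion --- the dual-growth loop at level $j$ terminates only once every currently active terminal has reached its level-$j$ limit, which already gives the lower bound $D_{L-1}\ge 2^{L-1}$ that your final arithmetic uses.
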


\begin{proof}
Let $X^*$ be a component in $F$ of highest class and let $c = \mathrm{Class}(X^*)$. Since $X^*$ is at level $c$, there must have been a terminal $s^* \in X^*$ that reached its limit in level $c-1$, so that  $2^{c-1} = \sum_{S: s^* \in S}y^{c-1}_S \leq \sum_S y_S^{c-1}$.  Similarly, we know that each terminal $s \in R_i$ has total dual in level $j$ of $\sum_{S \subseteq V: s \in S} y_S^j \leq 2^j$, so that the total value of the dual solution $y^j$ is at most $\sum_S y^j_S \leq |R_i| \cdot 2^j$.  Let $\ell = c - 1 - \lceil \log_2 |R_i|\rceil$; that is, $\ell$ is the level $\lceil \log_2 |R_i|\rceil$ levels below $c-1$.  We claim that we can neglect the dual value coming from levels below $\ell$ because it is not more than the value of level $c-1$.  In particular,
$$\sum_{j=-\infty}^{\ell-1} \sum_S y^j_S \leq |R_i| \sum_{j=-\infty}^{\ell-1} 2^j \leq |R_i| \cdot 2^{\ell}  \leq 2^{c-1}.$$   Then 
\begin{align*}
\sum_j \sum_S y^j_S & =  \sum_{j < \ell} \sum_S y^j_S + \sum_{j = \ell}^{c} \sum_S y^j_S \\
& \leq 2 \sum_{j =\ell}^c \sum_S y^j_S \\
& \leq 2 (c-\ell+1) \sum_S y^{\max}_S \\
& \leq 2 (\log |R_i| + 3) \sum_S y^{\max}_S.
\end{align*}
\end{proof}

Now, we are ready to prove Theorem \ref{thm}.

\begin{proofof}{Theorem \ref{thm}}  By Lemma \ref{feas}, at the end of time step $i$ of the algorithm, $F$ is a feasible solution to $(IP_i)$. We have
\begin{align*}
\sum_{e\in F}c_e &\leq  \sum_j\sum_S y_S^{j}  & \mbox{by Lemma \ref{cost}}\\
&\leq 2(\log |R_i| + 3)\sum_S y_S^{\max} & \mbox{by Lemma \ref{dualvars}}\\
& \leq 2(\log |R_i| + 3)OPT_i & \mbox{by Lemma \ref{feas}}
\end{align*}
where $OPT_i$ is the optimal value of $(IP_i)$ and the last inequality follows since the value of the feasible dual solution to $(D_i)$ $y^{\max}$ is a lower bound on $OPT_i$. Therefore, our algorithm is an $O(\log |R_i|)$-competitive algorithm for the online proper constrained forest problem. Note that we have $ |R_i| \leq n$, where $n$ is the number of nodes in $G$.  The constants can be made somewhat tighter, but we omit these details for the sake of clarity.
\end{proofof}

\section{Online Network Design with Penalties}
\label{sec:pccf}

In this section, we extend the algorithm of the previous section to one in which we are allowed to violate connectivity constraints by paying a penalty.  To do this, we will use a very general form of the problem introduced by Sharma, Swamy, and Williamson \cite{SSW07}.  In the offline version of their problem, they give a
arbitrary $0$-$1$ connectivity requirement function $f:2^V \rightarrow \{0,1\} $, and a submodular and monotone penalty function $\pi: 2^{2^V} \rightarrow \mathbb{Z}_{\geq0}$.  Note that the penalty function is on {\em collections} or {\em families} of sets, which we will denote by ${\cal S}$.  In the offline problem, we must find a set of edges $F$ and a family of sets ${\cal S}$ such that for any subset $S$ of vertices, either $|F \cap \delta(S)| \geq f(S)$ or $S \in {\cal S}$.  The goal is to minimize the cost of the edges in $F$ plus the penalty $\pi({\cal S})$.  Sharma et al.\ \cite{SSW07} restrict the penalty function to have the following properties:
\begin{itemize}
	\item (Emptyset property) $\pi(\emptyset) = 0$;
	\item (Monotonicity) If ${\cal S} \subseteq {\cal T}$, then $\pi({\cal S}) \leq \pi({\cal T})$.
	\item (Submodularity) For any collections ${\cal S}$ and ${\cal T}$, $\pi({\cal S}) +
	\pi({\cal T}) \geq \pi({\cal S} \cup {\cal T}) + \pi({\cal S} \cap {\cal T})$.
	\item \label{item:f-regular-union}(Union property) For any two subsets
	$S_1$ and $S_2$, $\pi(\{S_1,S_2,S_1\cup S_2\})=\pi(\{S_1,S_2\})$.
	\item \label{item:f-regular-complement}(Complement property)
	For any subset $S\subseteq V$, $\pi(\{S,V-S\})=\pi(\{S\})$.
	\item \label{item:f-regular-inactive}(Inactivity property) For
	any subset $S\subseteq V$ with $f(S)=0$, $\pi(\{S\})=0$.
\end{itemize}
Note that the last property implies that if a set has an associated penalty, then it must require some type of connectivity.   

\newcommand{\closure}{\mathsf{closure}}

To understand what penalty arises from a given solution $F$, let ${\cal C}$ be the connected components of $(V,F)$.  We call ${\cal T}$ the {\bf closure} of a collection of sets ${\cal S}$  if ${\cal T} \supseteq {\cal S}$ and ${\cal T}$ is closed under taking unions and complements (and thus intersections and set differences as well); we denote ${\cal T}$ by $\closure({\cal S})$.  Then given a solution $F$ and its connected components ${\cal C}$, the family of sets on which we must pay a penalty is $\closure({\cal C})$.

We extend the algorithm of the previous section to an online version of the problem, which we now define. We start with a connectivity requirement function $g_0$ and penalty function $\pi_0$, where $f_0(S) = 0$ for all $S\subseteq V$ and $\pi_0(\mathcal{S}) = 0$ for all $\mathcal{S} \subseteq 2^V$. In each time step $i$, a connectivity requirement function $f_i$ and a penalty function $\pi_i$ arrive with the following properties:
\begin{enumerate}
\item $f_i:2^V \rightarrow \{0,1\} $ is a proper function,
\item $g_i(S)= \max(f_1(S),\ldots,f_i(S))$ for all $S \subseteq V$,
\item $\pi_i: 2^{2^V} \rightarrow Z_{\geq 0}$ satisfies all other properties described above with respect to $f_i$.
\end{enumerate} 
Notice that unlike \cite{SSW07}, we require that the functions $f_i$ be proper functions, so that $g_i$ is a proper function.  We also observe that because $\pi_i$ obeys all the properties described above, then so does $\sum_{k=1}^i \pi_k$ with respect to $g_i$; in particular for the inactivity property $g_i(S) = 0$ implies that $\sum_{k=1}^i \pi_i(\{S\}) = 0.$

We will call this general online problem the {\em online prize-collecting constrained forest problem}.  In our variant of the problem, we assume that any decision made to add an edge to $F$ cannot be undone in future time steps, and any decision to pay a penalty also cannot be undone, even if we end up later fulfilling the associated connectivity constraint, in a sense that we now describe. If $i$ is the current time step, and we decide to pay the penalty for a collection ${\cal S}_i$, then we pay $\pi_i({\cal S}_i)$ in this time step and all future time steps. Thus if ${\cal S}_k$ is the collection on which we decided to pay the penalty in time step $k$, the total penalty we pay in time step $i$ is $$\sum_{k=1}^i \pi_k({\cal S}_k).$$

As usual, we compare the cost of the online algorithm in each time step $i$ to the cost of an optimal solution to the offline problem at time step $i$, and the algorithm is $\alpha$-competitive if the cost of the algorithm's solution is always within a factor of $\alpha$ of the cost of the optimal offline problem.  To be specific, the cost of the algorithm's solution is the cost of the edges plus the sum of the penalties across all time steps, while if the optimal set of edges for the offline problem is $F^*$, and the corresponding set of connected components is ${\cal C}^*$, then the cost of the optimal offline solution is $\sum_{e \in F^*} c_e + \sum_{k=1}^i\pi_k(\closure({\cal C}^*)).$

One problem captured by this framework is the online version of the prize-collecting generalized Steiner tree problem given by Hajiaghayi and Jain  \cite{HJ06}.  The online prize-collecting generalized Steiner tree problem is as follows: initially we are given an undirected graph $G$, and a penalty of zero for each pair of nodes. In each time step $i$, a terminal pair $(k, l)$ arrives with a new penalty $\pi_{kl} >0$. We have a choice to either connect $k$ to $l$ or pay a penalty $\pi_{kl}$ for not connecting them.  Our goal is to find a set of edges $F$ that minimizes the sum of edge costs in $F$ plus the sum of penalties for terminal pairs that are not connected.  The function $g_i$ is the same as for the online generalized Steiner tree problem; the penalty  function $\pi_i$ in time step $i$ for a family of sets ${\mathcal S}$ in this case is the sum
of penalties of pairs which are separated by some set in ${\mathcal S}$;
that is $\pi_i({\mathcal S}) = \pi_{kl}$ if there is an $S \in {\mathcal S}$ such that $|S \cap \{k,l\}|
= 1$, and $\pi_i({\mathcal S}) = 0$ otherwise.  Sharma et al.\ \cite{SSW07} show that $\pi_i$ obeys the required properties for the offline prize-collecting constrained forest problem.  Thus by our notion of penalties above, if we decide to pay the penalty $\pi_{kl}$ at the current time step, we continue to pay $\pi_{kl}$ in all future time steps, even if $k$ and $l$ are later connected.

Another interesting special case is the (offline) prize-collecting Steiner tree problem, first defined as Bienstock, Goemans, Simchi-Levi, and Williamson \cite{BienstockGSW93}.  In the offline version of the prize-collecting Steiner tree problem, we are given an undirected graph $G=(V,E)$, edge costs $c_e \geq 0$ for all $e \in E$, a root vertex $r \in V$, and penalties $\pi_v \geq 0$ for all $v \in V$.  The goal is to find a tree $T$ spanning the root vertex that minimizes the cost of the edges in the tree plus the penalties of the vertices not spanned by the tree; that is, we want to minimize $\sum_{e \in T} c_e + \sum_{v \in V-V(T)} \pi_v$, where $V(T)$ is the set of vertices spanned by $T$.  This is equivalent to the prize-collecting generalized Steiner tree problem in which one vertex in each terminal pair $(k,l)$ is the root $r$.  We define the online prize-collecting Steiner tree problem as follows: we are given a root node $r$ in $G$, and a penalty of zero for each non-root node. In each time step $i$, a terminal $s_i \neq r$ arrives with a new penalty $\pi_i >0$. We have a choice to either connect $s_i$ to root $r$ or pay a penalty $\pi_i$ for not connecting it. Let $R_i$ be the set of terminals that have arrived by time step $i$; that is, $R_i = \{ l: \pi_l > 0\}$.  Our goal is to find a set of edges $F$ that minimizes the sum of edge costs in $F$ plus the sum of penalties.  Since the problem is a special case of the online prize-collecting generalized Steiner tree problem, it is also a special case of the online prize-collecting constrained forest problem.  It follows that if we pay the penalty $\pi_i$ for not connecting terminal $s_i$ in time step $i$, we continue to pay the penalty in later iterations even if we later connect $s_i$ to the root.

A final special case of this problem is an online version of a problem introduced by Hayrapetyan, Swamy, and Tardos \cite{HST05}.  In the offline version of this problem, we are given an undirected graph $G=(V,E)$, edge costs $c_e \geq 0$ for all $e \in E$, a root vertex $r \in V$, and a monotone submodular penalty function $h$.  The goal is to find a tree $T$ spanning the root $r$ to minimize the cost of the edges in $T$ plus $h(S)$, where $S$ is the set of vertices not spanned by $T$.  We can give an online version of the problem by assuming that at each time step $i$ we receive a monotone submodular function $h_i$; if $S_i$ is the set of vertices not spanned at the end of time step $i$, then we pay $h_i(S_i)$ in penalty for that time step, and $\sum_{k=1}^i h_k(S_k)$ overall. To fit in our framework, we define $$p({\cal S}) = \bigcup_{S \in {\cal S}: r \notin S} S \cup \bigcup_{S \in {\cal S}: r \in S} (V-S),$$ and $\pi_i({\cal S}) = h_i(p({\cal S})).$  Sharma et al.\ show that $\pi_i$ satisfies all the properties needed by the offline prize-collecting forest problem if we assume that $f_i(S) = 1$ for all $S$ and all $i$.


The integer programming formulation of the problem in the $i$th time step is
\lps & &
& \mbox{Min} & \sum_{e \in E} c_{e} x_{e} + \sum_{\mathcal{S}} \sum_{k=1}^i\pi_k(\mathcal{S})z_{\mathcal{S}} \\
(IP_i) & & & & \sum_{e \in \delta(S)}x_{e} + \sum_{S: S\in \mathcal{S}} z_{\mathcal{S}} \geq g_i(S), & \forall S\subseteq V,\\
& & & & x_e \in \set{0,1}, & \forall e \in E,\\
& & & & z_{\mathcal{S}} \in \set{0,1}, & \mathcal{S} \subseteq 2^{V}.
 \elps
The optimal solution to the integer program gives the optimal offline solution in time step $i$.
Let $(LP_i)$ denote the corresponding linear programming relaxation in which the constraints $x_e \in  \{0,1\}$ and $z_{\mathcal{S}} \in  \{0,1\}$ are replaced with $x_e \geq 0$ and $z_{\mathcal{S}} \geq 0$. The dual of this linear program, $(D_i)$, is
\lps & &
& \mbox{ Max} &  \sum_{S \subseteq V} g_i(S) y_S  \\
(D_i) & & & & \sum_{S: e \in \delta(S)} y_{S} \leq c_e, & \forall e \in E,\\
& & & & \sum_{S: S \in \mathcal{S}} y_S \leq \sum_{k=1}^i\pi_k(\mathcal{S}), & \forall \mathcal{S} \subseteq 2^{V}\\
& & & & y_S \geq 0, & \forall S \subseteq V. \elps

For dual problem $(D_i)$, call the constraints  $\sum_{S:e\in\delta(S)} y_S  \leq   c_e$ the edge cost constraints and the constraints $ \sum_{S: S \in \mathcal{S}} y_S \leq \sum_{k=1}^i\pi_k(\mathcal{S})$ the penalty constraints. A penalty constraint corresponding to a family $\mathcal{S}^j$ is {\em tight} in level $j$ if the left-hand side of the inequality is equal to the right-hand side.

We extend the algorithm of Figure \ref{alg} to give an $O(\log |R_i|)$-competitive algorithm for the online prize-collecting constrained forest problem, where $R_i$ is defined as before; namely, $R_i$ is the set of all $v \in V$ for which $g_i(\{v\})=1$. We again call the vertices in $R_i$ terminals.

Our algorithm is similar  to the algorithm in Figure \ref{alg} in how it grows dual variables, with the same conditions (1)-(3) in that algorithm in the dual growth loop, but with an additional condition (4): when a penalty constraint corresponding to a family $\mathcal{S}^j$ becomes tight in level $j$, we mark all terminals $s$ with $s \in S^j \in \mathcal{S}^j$ and mark family $\mathcal{S}^j$ to pay its penalty.  Any marked terminal becomes inactive, and any marked moat $S^j$ also becomes inactive.  Additionally, when we update moats at the bottom of the dual growth loop for level $j$, if it is the case that $g_i(S^j) = 0$ for some moat $S^j$, then we make inactive all active terminals in $S^j$.

Let $Q$ be the collection of all families marked by our algorithm at a given point in the algorithm. At the beginning of time step $i$, we unmark each family $\mathcal{S}$ in $Q$ 
and unmark all terminals contained in a set $S$ in $\mathcal{S}$ (and all moats $S$) if $S$ is a violated set for function $g_i$. 
At the end of time step $i$, our algorithm outputs $F$ and the collection of marked families $Q$.

This algorithm can be implemented in polynomial time.  The only change from the algorithm in Figure \ref{alg} is that we need to be able to check condition (4); that is, we need to find the next dual penalty constraint to go tight in level $j$ efficiently.  To do this, we can apply the algorithm described in Section 5.3 of Sharma et al.\ \cite{SSW07}, which uses submodular function minimization; we observe that since each function $\pi_i$ is submodular and monotone, then so is $\sum_{k=1}^i \pi_i$.

\begin{figure}[t]
\begin{center}
{
\algorithm{Prize-Collecting Constrained Forest Algorithm}
{
$F = \emptyset$, $\bar{F}^{j} = \emptyset$ for all $j$, and $y_S^{j} = 0$ for all $j$ and $S\subseteq V$\\
For each $\{0,1\}$-proper function $f_i$ that arrives\\
\> Update active terminals $A$, and active moats $\mathcal{M}$\\
\> Set ${\cal X}^j = \emptyset$\\
\> For $j \gets -\infty$ to $\infty$\\
\> \> (Consolidate) While there is an edge $\bar e \in F \setminus \bar F^j$\\
\> \> \> Add $\bar e$ to $\bar F^j$\\
\> \> \> While there are terminals $s_1 \in A_j$, $s_2 \in P_j$ in the same moat $S^j$\\
\> \> \> that are not connected in $F$\\
\> \> \> \> Let $p \subseteq E$ be an $s_1$-$s_2$ path in $\bar F^j$ minimizing $\sum_{e \in p-F} c_e$\\
\> \> \> \> $F \gets F \cup \{p\}$, i.e. build edges $p-F$\\
\> \> \> \> Update $A$\\
\> \> Update active moats $\mathcal{M}$\\
 \> \> (Dual growth) While there are terminals active at level $j$\\
 \> \>\> Grow uniformly all active dual variables $y_S^{j}$ until\\
 \> \>\> 1) An active $y_S^j$ reaches its limit in level $j$\\
 \> \>\> 2) An edge $e\in E$ becomes tight in level $j$, then $\bar{F}^{j} = \bar{F}^{j} \cup \{e\}$ \\
\>  \>\> 3) Two terminals $s_1 \in A_j$ and $s_2 \in A_j \cup P_j$ connect in level $j$, then\\
\>  \>\>\>      Let $p \subseteq E$ be the $s_1$-$s_2$ path of edges in $\bar{F}^j$ minimizing $\sum_{e \in p-F} c_e$\\
\>  \>\>\>      $F = F \cup \{p \}$, i.e. build edges $p-F$ \\
\>  \>\>\>     Update $A$\\
\>  \>\>(4) A penalty constraint w.r.t.  family $\mathcal{S}^j$ becomes tight in level $j$\\
 \> \>\>\> Mark all terminals $s$ with $s \in S^j \in \mathcal{S}^j$; make $s$ and $S^j$ inactive\\
\>  \>\>\> Mark family $\mathcal{S}^j$ to pay its penalties\\
\> \> \>\> Add ${\cal S}^j$ to ${\cal X}^j$\\
\>  \>\>\>     Update $A$\\
\>  \>\> Update $\mathcal{M}$\\
  Let $Q$ be the families of sets marked to pay penalties\\
 Output $F$ and $Q$
}}
\end{center}
\caption{Primal-dual algorithm for the online prize-collecting constrained forest problem}
\label{pccf}
\end{figure}

We can reuse many parts of the analysis of the main algorithm.  There are two main changes to be concerned about.  The first is that unlike the previous algorithm, it is possible for a connected component $X$ to have $g_i(X) = 1$ but have no active terminal in it, in contradiction to Lemma \ref{inactive}.  This lemma was used in Lemma \ref{active} to show that if there is an active terminal in a moat, then the corresponding dual variable is active, so that we can be assured of obtaining a feasible solution for the function $g_i$.  Now our algorithm updates moats $S^j$ so that if $g_i(S^j) = 0$, then we make the terminals in the moat inactive. We can do so because in the penalty version of the problem we are allowed to have components $X$ of $F$ with $g_i(X)=1$ as long as we pay the associated penalty.

The second main change is that we have to pay the  penalty for the families of sets $Q$ returned by the algorithm, and also the penalties from prior time steps of the algorithm.  Because we only include such a family in $Q$ when the corresponding dual penalty constraint is tight, we can charge the additional penalty to incremental increases in dual variables.  Thus the total penalty over all time steps can be charged to a single copy of the dual variables, as we will show in detail below. So we charge the costs of the edges in $F$ to one copy of the dual variables, and the penalties to another copy; this increases the competitive ratio by a constant factor, but the ratio still remains $O(\log |R_i|)$.

To prove the result, we need the following lemmas.  The first two are from Sharma et al. \cite{SSW07}.  

\begin{lemma}[Lemma 4.1, Sharma et al.\ \cite{SSW07}] \label{lem:ssw1} Let ${\cal S}$ be a family of sets, and $S$ be any set such that $g(S) = 0$. Then, for any $S_1,  S_2 \in {\cal S}$, we have $\pi({\cal S}) = \pi({\cal S} \cup \{S_1 \cup S_2\}) = \pi({\cal S} \cup (V-S_1)) = \pi({\cal S} \cup \{S\})$.
\end{lemma}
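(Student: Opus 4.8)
The plan is to prove the three equalities in turn. In each of them the $\le$ direction is immediate from monotonicity, since ${\cal S}$ is a subfamily of each of ${\cal S}\cup\{S_1\cup S_2\}$, ${\cal S}\cup(V-S_1)$, and ${\cal S}\cup\{S\}$. All the content is therefore in the $\ge$ direction, for which I would apply the submodularity property once, to a well-chosen small family ${\cal T}$, and then collapse $\pi({\cal T})$ using the union, complement, or inactivity property. Throughout I would use that here $\pi$ abbreviates $\sum_{k=1}^i\pi_k$, which — as observed just before the lemma — inherits all six properties with respect to $g=g_i$; in particular $\pi$ is nonnegative and $\pi(\{S\})=0$ whenever $g(S)=0$.

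For $\pi({\cal S})=\pi({\cal S}\cup\{S\})$ I would apply submodularity to ${\cal S}$ and ${\cal T}=\{S\}$, giving $\pi({\cal S})+\pi(\{S\})\ge\pi({\cal S}\cup\{S\})+\pi({\cal S}\cap\{S\})$; since $g(S)=0$, the inactivity property gives $\pi(\{S\})=0$, and $\pi({\cal S}\cap\{S\})\ge 0$ by nonnegativity, so $\pi({\cal S})\ge\pi({\cal S}\cup\{S\})$. For $\pi({\cal S})=\pi({\cal S}\cup\{S_1\cup S_2\})$ I would apply submodularity to ${\cal S}$ and ${\cal T}=\{S_1,S_2,S_1\cup S_2\}$: because $S_1,S_2\in{\cal S}$ we have ${\cal S}\cup{\cal T}={\cal S}\cup\{S_1\cup S_2\}$ and $\{S_1,S_2\}\subseteq{\cal S}\cap{\cal T}$, so monotonicity gives $\pi({\cal S}\cap{\cal T})\ge\pi(\{S_1,S_2\})$, while the union property gives $\pi({\cal T})=\pi(\{S_1,S_2\})$; substituting into the submodularity inequality and cancelling $\pi(\{S_1,S_2\})$ from both sides leaves $\pi({\cal S})\ge\pi({\cal S}\cup\{S_1\cup S_2\})$. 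The equality $\pi({\cal S})=\pi({\cal S}\cup(V-S_1))$ follows by the same argument with ${\cal T}=\{S_1,V-S_1\}$, the complement property $\pi({\cal T})=\pi(\{S_1\})$ in place of the union property, and $\{S_1\}\subseteq{\cal S}\cap{\cal T}$ yielding $\pi({\cal S}\cap{\cal T})\ge\pi(\{S_1\})$ for the cancellation. Combining with the $\le$ directions gives all three equalities.

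I do not expect a genuine obstacle here — the lemma is a routine composition of the axioms. The one point that needs care is that in each application of submodularity the ``error'' term $\pi({\cal S}\cap{\cal T})$ must be bounded below by exactly the quantity that the union, complement, or inactivity property assigns to $\pi({\cal T})$ (namely $0$, $\pi(\{S_1,S_2\})$, and $\pi(\{S_1\})$ respectively), so that the two cancel and the desired one-sided inequality survives; checking that this matchup is correct in all three cases is the only thing to watch.
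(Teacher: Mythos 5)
Your proof is correct. The paper itself gives no argument for this lemma beyond the one-line remark that it ``follows from the union, complement, and inactivity properties of $\pi$'' (deferring to Sharma et al.), and your write-up supplies exactly the missing details: each $\le$ direction from monotonicity, and each $\ge$ direction from one application of submodularity to a small family ${\cal T}$ whose $\pi$-value collapses via the union, complement, or inactivity property, with $\pi({\cal S}\cap{\cal T})$ bounded below by the same quantity so the terms cancel. It is worth noting that your argument makes explicit that monotonicity, submodularity, and nonnegativity are genuinely needed here in addition to the three properties the paper name-checks; the matchup you flag at the end checks out in all three cases, and the cancellation is legitimate since all values are finite and nonnegative.
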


The lemma follows from the union, complement, and inactivity properties of $\pi$.

\begin{corollary} \label{cor:closure}
$\pi({\cal S}) = \pi(\closure({\cal S}))$.
\end{corollary}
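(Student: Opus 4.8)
The plan is to build up $\closure({\cal S})$ from ${\cal S}$ one set at a time, invoking Lemma \ref{lem:ssw1} to show that $\pi$ does not change at any step.

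First I would observe that, since $V$ is finite, every family of subsets of $V$ is finite, and $\closure({\cal S})$ — the smallest family containing ${\cal S}$ that is closed under unions and complements — can be described concretely as the collection of all subsets of $V$ obtainable from members of ${\cal S}$ by finitely many applications of the two operations ``take the union of two sets already present'' and ``take the complement of a set already present'' (closure under intersection and set difference is then automatic, since $S_1 \cap S_2 = V - ((V - S_1) \cup (V - S_2))$ and $S_1 \setminus S_2 = S_1 \cap (V - S_2)$). Consequently there is a finite chain ${\cal S} = {\cal S}_0 \subsetneq {\cal S}_1 \subsetneq \cdots \subsetneq {\cal S}_m = \closure({\cal S})$ in which each ${\cal S}_{t+1} = {\cal S}_t \cup \{T_t\}$ for a single new set $T_t$ that is either $S \cup S'$ with $S, S' \in {\cal S}_t$, or $V - S$ with $S \in {\cal S}_t$.

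Next I would apply Lemma \ref{lem:ssw1} at each link of the chain, taking the family in the lemma to be ${\cal S}_t$: it gives $\pi({\cal S}_t \cup \{S \cup S'\}) = \pi({\cal S}_t)$ for $S, S' \in {\cal S}_t$ and $\pi({\cal S}_t \cup (V - S)) = \pi({\cal S}_t)$ for $S \in {\cal S}_t$, which are exactly the two kinds of augmentation that appear along the chain. Hence $\pi({\cal S}_{t+1}) = \pi({\cal S}_t)$ for every $t$, and a trivial induction along the chain yields $\pi(\closure({\cal S})) = \pi({\cal S}_m) = \pi({\cal S}_0) = \pi({\cal S})$.

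The only point requiring care is the first step: one must be sure the closure is genuinely reached by a finite sequence of single-set union/complement augmentations built from sets already in the current family, since Lemma \ref{lem:ssw1} only licenses adding one such set at a time. This is routine given the finiteness of $2^V$ and the standard ``family generated by a set of operations'' construction, so I do not expect a real obstacle beyond this bookkeeping.
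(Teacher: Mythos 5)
Your proof is correct and is exactly the argument the paper intends: the paper states the corollary without proof as an immediate consequence of Lemma \ref{lem:ssw1}, and your finite chain of single-set union/complement augmentations, with the lemma applied at each link, is just that implicit induction written out. The bookkeeping point you flag (that $\closure({\cal S})$ is reached by finitely many such augmentations) is indeed routine since $2^V$ is finite, so there is no gap.
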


\begin{lemma}[Lemma 4.2, Sharma et al.\ \cite{SSW07}] 
	\label{lem:ssw}	
	If there are two families ${\cal S}^j$ and ${\cal T}^j$ that are tight in level $j$ for the associated penalty constraints, then the family ${\cal S}^j \cup {\cal T}^j$ is also tight in level $j$ for its associated penalty constraint.
\end{lemma}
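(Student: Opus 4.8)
The plan is to carry out the standard uncrossing argument for tight constraints of a submodular function. First I would fix notation: let $\Pi_i(\mathcal{A}) = \sum_{k=1}^i \pi_k(\mathcal{A})$ denote the combined penalty function appearing on the right-hand side of the penalty constraints of $(D_i)$; as already noted in the text, $\Pi_i$ is submodular since each $\pi_k$ is and sums of submodular functions are submodular. For the fixed level $j$ in the statement, write $y(\mathcal{A}) = \sum_{S \in \mathcal{A}} y_S^j$. Since $y(\cdot)$ is just a sum over the members of a family, it is modular: for any families $\mathcal{A}$ and $\mathcal{B}$ of subsets of $V$ we have $y(\mathcal{A} \cup \mathcal{B}) = y(\mathcal{A}) + y(\mathcal{B}) - y(\mathcal{A} \cap \mathcal{B})$. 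Feasibility of $y^j$ for $(D_i)$ gives $y(\mathcal{A}) \le \Pi_i(\mathcal{A})$ for every family $\mathcal{A}$, and the penalty constraint for $\mathcal{A}$ is tight in level $j$ exactly when this holds with equality.

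The key step is then a short chain of (in)equalities. Using the modular identity and the tightness hypotheses $y(\mathcal{S}^j) = \Pi_i(\mathcal{S}^j)$ and $y(\mathcal{T}^j) = \Pi_i(\mathcal{T}^j)$,
\[ y(\mathcal{S}^j \cup \mathcal{T}^j) = y(\mathcal{S}^j) + y(\mathcal{T}^j) - y(\mathcal{S}^j \cap \mathcal{T}^j) = \Pi_i(\mathcal{S}^j) + \Pi_i(\mathcal{T}^j) - y(\mathcal{S}^j \cap \mathcal{T}^j). \]
Now apply submodularity of $\Pi_i$, namely $\Pi_i(\mathcal{S}^j) + \Pi_i(\mathcal{T}^j) \ge \Pi_i(\mathcal{S}^j \cup \mathcal{T}^j) + \Pi_i(\mathcal{S}^j \cap \mathcal{T}^j)$, together with feasibility $y(\mathcal{S}^j \cap \mathcal{T}^j) \le \Pi_i(\mathcal{S}^j \cap \mathcal{T}^j)$, to obtain
\[ y(\mathcal{S}^j \cup \mathcal{T}^j) \ge \Pi_i(\mathcal{S}^j \cup \mathcal{T}^j) + \Pi_i(\mathcal{S}^j \cap \mathcal{T}^j) - y(\mathcal{S}^j \cap \mathcal{T}^j) \ge \Pi_i(\mathcal{S}^j \cup \mathcal{T}^j). \]
On the other hand, feasibility of $y^j$ also gives $y(\mathcal{S}^j \cup \mathcal{T}^j) \le \Pi_i(\mathcal{S}^j \cup \mathcal{T}^j)$. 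Hence equality holds throughout, i.e. the penalty constraint for $\mathcal{S}^j \cup \mathcal{T}^j$ is tight in level $j$, which is what we wanted.

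I do not expect any genuine obstacle here: this is the classical fact that the tight constraints of a submodular function are closed under union (and intersection). The only points that need a word of care are recording that the aggregated penalty function $\Pi_i$ inherits submodularity from the individual $\pi_k$ (already remarked in the paper) and observing that the level-$j$ dual objective restricted to families is modular so that inclusion–exclusion applies exactly; both are immediate.
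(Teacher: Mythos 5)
Your proof is correct: the paper itself states this lemma without proof, citing Sharma et al., and your uncrossing argument (modularity of $\sum_{S\in\mathcal{A}} y_S^j$ plus submodularity of the aggregated penalty $\sum_{k=1}^i \pi_k$, which the paper already notes is inherited from the individual $\pi_k$) is exactly the standard argument used there. No gaps.
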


For time step $i$, let ${\cal X}^j$ be the union of all marked families from level $j$, and let ${\cal I}_i$ be the collection of all sets $S$ such that $g_i(S) = 0$.  We defer the proof of the following lemma for a moment.

\begin{lemma} \label{lem:closure}
For any connected component $C$ of $(V,F)$ during time step $i$, we have $$C \in \closure\left({\cal I}_i \cup \bigcup_j {\cal X}^j\right).$$
\end{lemma}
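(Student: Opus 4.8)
The plan is to induct on the construction of the connected components of $(V,F)$ during time step $i$, following the structure of how edges get added to $F$ in the algorithm. Let $\mathcal{T}_i := \closure\left({\cal I}_i \cup \bigcup_j {\cal X}^j\right)$; since $\mathcal{T}_i$ is a closure it is closed under unions, complements, intersections, and set differences, and contains every set $S$ with $g_i(S)=0$ as well as every marked set. We want to show every connected component $C$ of $F$ lies in $\mathcal{T}_i$.

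First I would handle the base case: singleton components $\{v\}$. Either $v$ is a non-terminal, in which case $g_i(\{v\}) = 0$ and so $\{v\} \in {\cal I}_i \subseteq \mathcal{T}_i$; or $v$ is a terminal that was never connected by the algorithm, which (by the termination argument and the inactivity rules) can only happen if $v$ was marked via a tight penalty constraint, so $\{v\}$ belongs to some marked family ${\cal S}^j$, hence $\{v\} \in {\cal X}^j \subseteq \mathcal{T}_i$. For the inductive step, consider a component $X_3$ formed when the algorithm builds an $s_1$-$s_2$ path $p$ joining components $X_1$ and $X_2$ (in either the consolidate loop or step (3) of the dual-growth loop). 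By induction $X_1, X_2 \in \mathcal{T}_i$. I need to argue that the intermediate vertices on $p - F$ — which are interior vertices of the newly built path — together with $X_1 \cup X_2$ give a set in $\mathcal{T}_i$. Each such interior vertex $v$ is not a terminal at the moment the path is built in the sense relevant here: I would argue that every connected component of the \emph{old} $F$ that $p$ passes through, other than $X_1$ and $X_2$, is a component with no active terminal, hence (now invoking the analog of Lemma \ref{inactive} appropriate to the penalty setting, or the fact that such a component's terminals were all marked) it already lies in $\mathcal{T}_i$; then $X_3$ is a union of $X_1$, $X_2$, finitely many such components, and singleton non-terminals, all of which are in $\mathcal{T}_i$, so $X_3 \in \mathcal{T}_i$ by closure under unions.

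The subtle point, and the one I expect to be the main obstacle, is exactly the claim that every "passed-through" component $Y$ of the old $F$ along the path $p$ (other than $X_1, X_2$) lies in $\mathcal{T}_i$. In the non-penalty algorithm this would follow from Lemma \ref{inactive} ($Y$ has no active terminal $\Rightarrow g_i(Y) = 0 \Rightarrow Y \in {\cal I}_i$), but in the penalty version Lemma \ref{inactive} fails — a component can have $g_i(Y) = 1$ and no active terminal because its terminals were marked. So the argument has to be: any terminal in $Y$ is either inactive because $g_i$ of its component is $0$ (handled by the moat-update rule making terminals in a moat with $g_i(S^j)=0$ inactive, which I would need to trace back to conclude $Y \subseteq$ some set of ${\cal I}_i$ or is itself built from such sets) or inactive because it was marked (so it lies in some ${\cal X}^j$). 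Pinning down this case analysis cleanly — in particular handling a component whose terminals became inactive for \emph{mixed} reasons, and making sure the "closure" operation legitimately absorbs the union of the relevant $g_i = 0$ sets and marked families into a set containing all of $Y$ — is where the real work is. I would structure it as: at the end of time step $i$, every component $X$ of $F$ has either $g_i(X) = 0$ (then $X \in {\cal I}_i$ directly) or $g_i(X) = 1$, in which case I claim $X$ can be written as a union of sets from ${\cal I}_i \cup \bigcup_j {\cal X}^j$, by tracking through which terminals in $X$ got marked and which subcomponents had $g_i = 0$ when they were absorbed; closure under unions then finishes it. This reduces the lemma to the clean statement that the "active-terminal-free" pieces are accounted for, which is what Lemmas \ref{lem:ssw1}, \ref{lem:ssw}, and Corollary \ref{cor:closure} are set up to exploit in the cost analysis that follows.
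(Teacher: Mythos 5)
There is a genuine gap, and it sits exactly where you predicted the ``real work'' would be. Your plan is to write each component $X$ with $g_i(X)=1$ as a \emph{union} of sets from $\mathcal{I}_i \cup \bigcup_j \mathcal{X}^j$. But the sets placed into $\mathcal{X}^j$ by step (4) are level-$j$ \emph{moats} $S^j$, and a moat is a union of several connected components of $F$, generally a strict superset of the component $C$ whose terminals it deactivates; likewise the moat-update rule deactivates the terminals of $C$ because $g_i(S^j)=0$ for the surrounding moat, not because $g_i(C)=0$. So $C$ cannot in general be assembled from unions alone: you must carve $C$ out of $S^j$ by set difference, $C = S^j \setminus \bigcup_{C' \neq C} C'$, and to do that inside the closure you need to already know that every \emph{other} component $C'$ of $F$ contained in $S^j$ lies in $\closure\bigl(\mathcal{I}_i \cup \bigcup_j \mathcal{X}^j\bigr)$. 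The same defect appears concretely in your base case: a marked terminal $v$ gives $S^j \in \mathcal{X}^j$ for the moat $S^j$ containing $v$, not $\{v\} \in \mathcal{X}^j$; and you omit the possibility that $v$ was deactivated by the moat-update rule ($g_i(S^j)=0$ while $g_i(\{v\})=1$), where again only the moat, not the singleton, belongs to $\mathcal{I}_i$.

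The paper resolves this circularity with a minimal-counterexample argument rather than induction on path additions: take the \emph{earliest} moment at which some component $C$ with $g_i(C)=1$ has lost all its active vertices yet is not in the closure. At that moment the moat $S^j$ containing $C$ lies in $\mathcal{I}_i \cup \mathcal{X}^j$, since the deactivation was caused either by a penalty constraint going tight (marking $S^j$) or by $g_i(S^j)=0$; Lemma \ref{unique} guarantees all active vertices of $S^j$ were in $C$, so every sibling component $C'$ of the moat either has $g_i(C')=0$ or, by minimality of the counterexample, is already in the closure; set difference then places $C$ in the closure, a contradiction. Note also that the lemma is only needed (and only holds) for components with no active terminal --- a component still containing an active terminal need not be in the closure --- so any induction should be organized around deactivation events, not around the events that add paths to $F$; your path-addition induction would already fail for an active singleton terminal.
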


\begin{corollary}  \label{cor:pen} If ${\cal C}$ is the set of all connected components of $(V,F)$ at the end of time step $i$, then
	$$\pi_i(\closure({\cal C})) \leq \sum_j \pi_i({\cal X}^j).$$
\end{corollary}

\begin{proof}
By Lemma \ref{lem:closure}, for any $C \in {\cal C}$, $C \in \closure\left({\cal I}_i \cup \bigcup_j {\cal X}^j\right)$, so that $$\closure({\cal C}) \subseteq \closure\left({\cal I}_i \cup \bigcup_j {\cal X}^j\right).$$  By the monotonicity of  the penalty function $\pi_i$, 
\begin{align*}
\pi_i(\closure({\cal C}))  
 & \leq \pi_i\left(\closure\left({\cal I}_i \cup \bigcup_j {\cal X}^j\right)\right)\\
& = \pi_i\left({\cal I}_i \cup \left(\bigcup_j {\cal X}^j\right)\right) \\
& = \pi_i\left(\bigcup_j {\cal X}^j\right) \\
& \leq \sum_j \pi_i({\cal X}^j).
\end{align*}
where the first equality follows from Corollary \ref{cor:closure}, the second equality follows by Lemma \ref{lem:ssw1}, and the final inequality by the submodularity of $\pi_i$.	
\end{proof}

We can now prove the main theorem.

\begin{theorem} The algorithm in Figure \ref{pccf} gives an $O(\log |R_i|)$-competitive algorithm for the online prize-collecting constrained forest problem $(IP_i)$.
\label{thm4}
\end{theorem}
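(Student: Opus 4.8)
The plan is to mirror the analysis of Theorem \ref{thm} as closely as possible, charging the edge cost of $F$ to one copy of the dual solutions $\{y^j\}$ and the total penalty (summed over all time steps) to a second copy, losing only a constant factor. First I would establish feasibility: at the end of time step $i$, the pair $(F,Q)$ is a feasible solution to $(IP_i)$, and each $y^j$ is feasible for $(D_i)$. Dual feasibility is immediate from the algorithm, since growth of $y^j_S$ is stopped precisely when an edge cost constraint (condition 2) or a penalty constraint (condition 4) would be violated. For primal feasibility, I need the analogue of Lemma \ref{active} in the penalty setting: in every iteration of the dual growth loop at level $j$, an active terminal lies in an active moat, where now a moat can become inactive either because $g_i(S^j)=0$ (handled by the new ``make inactive'' step) or because its penalty family was marked. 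The key point is that when the algorithm halts time step $i$, every connected component $C$ of $F$ with $g_i(C)=1$ has had all its terminals made inactive, which can only have happened because $C$ is contained in some marked family (condition 4) or because $g_i(C)=0$ after a moat update — and the former is exactly what Lemma \ref{lem:closure} asserts, so $C \in \closure\bigl({\cal I}_i \cup \bigcup_j {\cal X}^j\bigr)$ and the connectivity requirement for $C$ is met by paying the penalty on $Q$.

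Next I would bound the edge cost exactly as before. Lemmas \ref{unique}, \ref{inactive} (suitably reinterpreted — a component with no active terminal either has $g_i=0$ or had its terminals marked for penalty), \ref{path-cost}, \ref{merge-cost}, and \ref{cost} all go through essentially verbatim, because condition (4) only makes terminals \emph{inactive} and never adds edges to $F$; the accounting argument of Lemma \ref{cost} is unaffected, since new penalty-marking events only decrease the pool of active duals. Hence $\sum_{e\in F} c_e \le \sum_j \sum_S y^j_S$, and by Lemma \ref{dualvars} this is at most $2(\log|R_i|+3)\sum_S y^{\max}_S \le 2(\log|R_i|+3)\,OPT_i$, since the value of any feasible dual $y^{\max}$ for $(D_i)$ lower-bounds $OPT_i$.

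It remains to bound the total penalty $\sum_{k=1}^i \pi_k({\cal S}_k)$, where ${\cal S}_k = Q$ at the end of time step $k$. This is where the real work is, and it is the step I expect to be the main obstacle. By Corollary \ref{cor:pen}, in time step $i$ the penalty paid is $\pi_i(\closure({\cal C})) \le \sum_j \pi_i({\cal X}^j)$; I would like to say each $\pi_i({\cal X}^j)$ equals the left-hand side of the (tight) penalty constraint for ${\cal X}^j$ at level $j$, i.e. $\sum_{S \in {\cal X}^j} y^j_S$, which by Lemma \ref{lem:ssw} is legitimate since ${\cal X}^j$ is a union of tight families and is therefore itself tight with penalty value $\sum_{k=1}^i \pi_k({\cal X}^j) = \sum_{S\in{\cal X}^j} y^j_S$. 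The subtlety is that we pay the time-$k$ penalty in \emph{every} subsequent time step, so the cumulative penalty is $\sum_{k=1}^i \pi_k({\cal S}_k)$, not $\sum_k \pi_i({\cal S}_k)$; but because the algorithm unmarks a family at the start of a time step only when some set in it becomes violated, a family marked at step $k$ and still ``active as a penalty'' at step $i$ had its penalty constraint go tight due to dual growth that is still present in the current $y^j$, so the cumulative penalty telescopes into $\sum_j \sum_S y^j_S$ once more — i.e. the incremental dual growth that caused each marking pays for the incremental penalty it incurs across all time steps. Formally I would argue that at any point, $\sum_{k}\pi_k(\text{currently marked families}) \le \sum_j \sum_S y^j_S$ by maintaining, in a second shadow account, the invariant that the total penalty committed so far equals the sum of dual growth ``assigned to penalties,'' which is at most $\sum_j\sum_S y^j_S$. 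Combining, the total penalty is at most $\sum_j\sum_S y^j_S \le 2(\log|R_i|+3)\,OPT_i$ by Lemma \ref{dualvars} again. Adding the edge-cost and penalty bounds gives total cost at most $4(\log|R_i|+3)\,OPT_i = O(\log|R_i|)\cdot OPT_i$, as claimed.
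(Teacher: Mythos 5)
Your proposal is correct and follows essentially the same route as the paper: feasibility plus reuse of the unchanged edge-charging machinery (Lemmas \ref{unique}--\ref{cost}, \ref{dualvars}) for one copy of the duals, and an amortized charge of the cumulative penalty $\sum_{k=1}^i \pi_k(\mathcal{S}_k)$ to a second copy. The ``telescoping'' invariant you sketch for the penalty bound is exactly the paper's induction on time steps, made precise there by combining tightness of $\mathcal{X}^{j,i}$ at step $i$ (via Lemma \ref{lem:ssw}) with feasibility of $y^{j,i-1}$ for $(D_{i-1})$, which bounds $\sum_j \pi_i(\mathcal{X}^{j,i})$ by the incremental dual growth $\sum_j\sum_S (y^{j}_S - y^{j,i-1}_S)$.
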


\begin{proof}
Since $\pi_i(\mathcal{S}) \geq  0$ for all $\mathcal{S} \subseteq 2^{V}$ and $i \geq 1$, each dual solution $y^j$ that is feasible at the end of time step $i$ will remain feasible at the beginning of time step $i+1$. By construction each dual solution $y^j$ is feasible for $(D_i)$, and any set of edges $F$ is feasible for $(IP_i)$ as long as we pay the associated penalty.

To bound total edge costs and penalties, we need to bound the cost of edges built by conditions (3) and incremental penalties paid by condition (4). By Lemma \ref{dualvars}, we have $\sum_{e\in F}c_e   \leq 2(\log |R_i| + 3) \sum_S y_S^{\max}$. 

We need to use another copy of the dual variables to bound the penalties.  Denote by ${\cal X}^{j,k}$ the union of all families that went tight at level $j$ in time step $k$, and by $y^{j,k}_S$ the value of the dual variable $y^j_S$ at the end of time step $k$. Let ${\cal X}^j = {\cal X}^{j,i}$ be the union of the families in level $j$ that correspond to a tight penalty constraint in the current time step.  If ${\cal C}$ is the set of connected components at the end of the time step $i$, then  penalty added is $\pi_i(\closure({\cal C}))$, which is at most
$$\sum_j \pi_i({\cal X}^j)$$
by Corollary \ref{cor:pen}.  Thus the total penalty to be paid in this time step is at most
$$\sum_{k=1}^i \sum_j \pi_k({\cal X}^{j,k}).$$
We now show by induction that this total penalty is bounded above by the sum of the dual variables; in particular, we prove that
\begin{equation} \label{eq:pen} 
\sum_{k=1}^i \sum_j \pi_k({\cal X}^{j,k})  \leq  \sum_j \sum_S y^{j}_S. 
\end{equation}
By Lemma \ref{lem:ssw}, it must be the case that $\sum_{k=1}^i \pi_k({\cal X}^j) = \sum_{S \in \mathcal{X}^j} y_S^j$, and by the feasibility of the dual solution in time step $i-1$, it is the case that $\sum_{S \in {\cal X}^j} y_S^{j,i-1} \leq  \sum_{k=1}^{i-1} \pi_k ({\cal X}^{j}).$  Thus we have that 
\begin{align*}
\sum_j \pi_i({\cal X}^j) & = \sum_{k=1}^i \sum_j \pi_k({\cal X}^j) - \sum_{k=1}^{i-1} \sum_j \pi_k({\cal X}^j)\\
& \leq \sum_j \sum_{S \in {\cal X}^j} (y^j_S - y^{j,i-1}_S)
\end{align*}  By induction 
$$\sum_{k=1}^{i-1} \sum_j \pi_k({\cal X}^{j,k})  \leq  \sum_j \sum_S y^{j,i-1}_S.$$
Thus
\begin{align*}
\sum_{k=1}^i \sum_j \pi_k({\cal X}^{j,k}) & = \sum_{k=1}^{i-1} \sum_j \pi_k({\cal X}^{j,k}) + \sum_j \pi_i({\cal X}^j) \\
& \leq \sum_j \sum_S y_S^{j,i-1} + \sum_j \sum_S (y^j_S - y^{j,i-1}_S) \\
& \leq \sum_j \sum_S y^j_S,
\end{align*}
and Inequality (\ref{eq:pen}) is shown.

Therefore, using Lemma \ref{dualvars}, we have that 
\begin{align*}
\sum_{e \in F} c_{e} + \sum_{k=1}^i \pi_k({\cal X}^{j,k})
 & \leq 2 \sum_j \sum_S y^j_S \\
&\leq  4(\log |R_i| + 3)  \sum_S y_S^{\max} \\
&\leq O(\log |R_i|) OPT_i.
\end{align*}
 \end{proof}

We now turn to the proof of Lemma \ref{lem:closure}.

\begin{proofof}{Lemma \ref{lem:closure}}
We give a proof by contradiction.  For a given time step $i$, pick the earliest point in the algorithm at which there is a component $C$ of $(V,F)$ such that $g_i(C)=1$ for the current time step $i$, there is no active vertex in $C$, and yet $C \notin \closure\left({\cal I}_i \cup \bigcup_j {\cal X}^j\right).$ Suppose the algorithm is currently in level $j$. Just prior to this point in time, $C$ must have contained an active vertex, since at the beginning of the time step $i$, any component $C$ of $F$ with $g_i(C)=1$ must contain an active vertex. Let $S^j \in {\cal S}^j$ be the level-$j$ moat containing $C$. Recall that $S^j$ is partitioned into $C$ and other components $C'$. 

First, we show that $S^j \in {\cal I}_i \cup {\cal X}^j$. There are only two possible steps in the algorithm that could cause a terminal in $C$ to become inactive when $g_i(C)=1$. The first possibility is step (4) of the dual growth phase; in this case, a penalty constraint must have gone tight for some family ${\cal S}^j$ with $S^j \in {\cal S}^j$. The second possibility is that $g_i(S^j)=0$, so when the algorithm updated moats at the bottom of the dual growth loop, it made all active terminals in the moat inactive. In both cases, we have $S^j \in {\cal I}_i \cup {\cal X}^j$.

Second, we show that every other component $C' \neq C$ in $S^j$ belongs to $\closure\left({\cal I}_i \cup  {\cal X}^j\right)$. By Lemma \ref{unique}, prior to this point in time, all the active vertices are contained in $C$. Thus, either $g_i(C')=0$ and $C' \in {\cal I}_i$, or $g_i(C')=1$. In the latter case, since $C'$ did not contain an active vertex, in order not to contradict our choice of $C$, it must be that $C' \in \closure\left({\cal I}_i \cup {\cal X}^j\right).$ 

Since $S^j$ and every other component $C' \neq C$ in $S^j$ belongs to $\closure\left({\cal I}_i \cup {\cal X}^j\right)$, it must be that $C \in \closure\left({\cal I}_i \cup {\cal X}^j\right)$. This gives the desired contradiction and concludes the proof of the lemma.
\end{proofof}

\section{Conclusion}
\label{sec:conc}

In the online generalized Steiner network problem, we are given as input an undirected graph and nonnegative edge costs, and in the $i$th time step, a pair of terminals $(s_i$,$t_i)$ arrives with a connectivity requirement $r_i$.  One must then augment the current solution so that there are at least $r_i$ edge-disjoint paths between $s_i$ and $t_i$.  It is an interesting open question whether primal-dual algorithms for the offline generalized Steiner network  design problem (such as those in \cite{WilliamsonGMV95,GoemansGPSTW94}) can be adapted to the online case as we did here for the online constrained forest problem. Gupta, Krishnaswamy, and Ravi \cite{GuptaKR09} have shown that if $R_i$ is the set of terminals that have arrived by the $i$th time step, then there is a lower bound of $\Omega(|R_i|)$ on the competitive ratio. If $r_{\max} = \max_i r_i$, Gupta et al.\ \cite{GuptaKR09} have given an $O(r_{\max} \log^3 n)$-competitive algorithm for this problem, so such an adaptation might be possible.

Another interesting question is what happens if the algorithm is allowed to remove some small number of edges from the solution as time progresses.  In particular, Gu, Gupta, and Kumar \cite{GuGK13} have shown that it is possible to have a constant competitive ratio algorithm for the online Steiner tree problem if, in addition to adding edges at each time step, it is allowed to remove a single edge per time step; this work builds on a previous algorithm of Megow, Skutella, Verschae, and Wiese \cite{MegowSVW12}.  It would be interesting to extend their algorithm to the more general types of network design considered in this paper.

For approximation algorithms and online algorithms, it is often the case that their performance is better than the theoretical worst-case analysis. Cheung \cite{Cheung15} has performed a computational study of various online algorithms for the online prize-collecting Steiner tree problem, including the algorithm presented here and the algorithm of Umboh \cite{Umboh15}. She finds that for our algorithm that the average competitive ratio is 1.848 among 40 instances with up to 400 nodes.  Umboh's algorithm has better performance, with an average competitive ratio of 1.341.

\bibliographystyle{spmpsci}
\bibliography{onlinenetdesign}

\end{document}